  \tikzset{snake arrow/.style= {->, decorate, decoration={snake,amplitude=.4mm,segment length=1.7mm,post length=2mm}}}
\newcommand{\fun}[1]{\ensuremath{\textsl{#1}}}
\newcommand{\eps}{\varepsilon}
\DeclareMathOperator{\A}{\mathcal A}
\DeclareMathOperator{\B}{\mathcal B}
\DeclareMathOperator{\D}{\mathcal D}
\DeclareMathOperator{\M}{\mathcal M}
\newcommand{\R}{\mathcal{R}}
\newcommand{\J}{\mathcal{J}}
\newcommand{\tuple}[1]{\langle{#1}\rangle}
\newcommand{\blank}{\llcorner\!\lrcorner}
\newcommand{\Deltaplus}{\Delta_{\#\$}}
\newcommand{\enc}{\fun{enc}}
\newcommand{\complclass}[1]{{\sc #1}\xspace}
\newcommand{\ACzero}{\complclass{AC${}^0$}}
\newcommand{\LogSpace}{\complclass{L}}
\newcommand{\NL}{\complclass{NL}}
\newcommand{\coNP}{\complclass{coNP}}
\newcommand{\PSpace}{\complclass{PSpace}}
\newcommand{\complete}{comp.}
\theoremstyle{plain}
\newcommand{\newreptheorem}[2]{\newtheorem*{rep@#1}{\rep@title}\newenvironment{rep#1}[1]{\def\rep@title{#2 \ref*{##1}}\begin{rep@#1}}{\end{rep@#1}}}
\title{Universality of Confluent, Self-Loop Deterministic Partially Ordered NFAs is Hard%
\footnote{This work was supported by the German Research Foundation (DFG) within the Collaborative Research Center SFB 912 (HAEC) and in Emmy Noether grant KR~4381/1-1 (DIAMOND)}}
\titlerunning{Universality of Confluent, Self-Loop Deterministic poNFAs is Hard}
\author[1]{Tom\'{a}\v{s} Masopust}
\author[2]{Markus Kr\"otzsch}
\affil[1]{Institute of Theoretical Computer Science and Center of Advancing Electronics Dresden (cfaed), TU Dresden, Germany,
  \texttt{tomas.masopust@tu-dresden.de}
}
\affil[2]{Institute of Theoretical Computer Science and Center of Advancing Electronics Dresden (cfaed), TU Dresden, Germany,
  \texttt{markus.kroetzsch@tu-dresden.de}
}
\authorrunning{T. Masopust, M. Kr\"otzsch} 
\subjclass{F.1.1 Models of Computation, F.4.3 Formal Languages}
\keywords{Automata, Universality, Complexity, Straubing-Th\'erien hierarchy}
\begin{document}

\maketitle

\begin{abstract}
  An automaton is partially ordered if the only cycles in its transition diagram are self-loops. The expressivity of partially ordered NFAs (poNFAs) can be characterized by the Straubing-Th\'e\-ri\-en hierarchy. Level~$3/2$ is recognized by poNFAs, level~1 by confluent, self-loop deterministic poNFAs as well as by confluent poDFAs, and level~$1/2$ by saturated poNFAs. We study the universality problem for confluent, self-loop deterministic poNFAs. It asks whether an automaton accepts all words over its alphabet. Universality for both NFAs and poNFAs is a \PSpace-complete problem. For confluent, self-loop deterministic poNFAs, the complexity drops to \coNP-complete if the alphabet is fixed but is open if the alphabet may grow. We solve this problem by showing that it is \PSpace-complete if the alphabet may grow polynomially. Consequently, our result provides a lower-bound complexity for some other problems, including inclusion, equivalence, and $k$-piecewise testability. Since universality for saturated poNFAs is easy, confluent, self-loop deterministic poNFAs are the simplest and natural kind of NFAs characterizing a well-known class of languages, for which deciding universality is as difficult as for general NFAs. 
\end{abstract}

\section{Introduction}
  McNaughton and Papert~\cite{McNaughton1971} showed that {\em first-order logic\/} describes {\em star-free languages}, a class of regular languages whose syntactic monoid is {\em aperiodic\/}~\cite{Schutzenberger65a}. Restricting the number of alternations of quantifiers in formulae in the prenex normal form results in a quantifier alternation hierarchy. The choice of predicates provides several hierarchies. The well-known and closely related 
  are the \emph{Straubing-Th\'erien (ST) hierarchy}~\cite{Straubing81,Therien81} and the \emph{dot-depth hierarchy}~\cite{BrzozowskiK78,CohenB71,Straubing85}.
  
  We are interested in automata characterizations of the levels of the ST hierarchy, alternatively defined as follows. For an alphabet $\Sigma$, $\mathscr{L}(0)=\{\emptyset, \Sigma^*\}$ and, for integers $n\geq 0$, 
    level $\mathscr{L}(n+1/2)$ consists of all finite unions of languages $L_0 a_1 L_1 a_2 \ldots a_k L_k$ with $k\geq 0$, $L_0,\ldots, L_k\in\mathscr{L}(n)$, and $a_1,\ldots,a_k\in\Sigma$, and
    level $\mathscr{L}(n+1)$ consists of all finite Boolean combinations of languages from level $\mathscr{L}(n+{1}/{2})$.
  The hierarchy does not collapse on any level~\cite{BrzozowskiK78}.
  For some levels, an algebraic or automata characterization is known. This characterization is particularly interesting in decision and complexity questions, such as the membership of a language in a specific level of the hierarchy. Despite a recent progress~\cite{AlmeidaBKK15,Place15,PlaceZ15}, deciding whether a language belongs to level $k$ of the ST hierarchy is still open for $k>{7}/{2}$. 

  The most studied level of the ST hierarchy is level 1, known as {\em piecewise testable languages\/} introduced by Simon~\cite{Simon1972}. Simon showed that piecewise testable languages are those regular languages whose syntactic monoid is $\J$-trivial and that they are recognized by {\em confluent, partially ordered DFAs}.
  An automaton is {\em partially ordered\/} if its transition relation induces a partial order on states -- the only cycles are self-loops -- and it is {\em confluent\/} if for any state $q$ and any two of its successors $s$ and $t$ accessible from $q$ by transitions labeled by $a$ and $b$, respectively, there is a word $w\in\{a,b\}^*$ such that a common state is reachable from both $s$ and $t$ under $w$, cf. Figure~\ref{fig_bad_pattern} (left) for an illustration.
  
  \begin{figure}
    \centering
    \begin{tikzpicture}[baseline,->,>=stealth,auto,shorten >=1pt,node distance=1.5cm,
      state/.style={circle,minimum size=5mm,very thin,draw=black,initial text=}]
      \node[state]  (1) {$q$};
      \node         (0) [right of=1]  {};
      \node[state]  (2) [above of=0,node distance=.5cm]  {$s$};
      \node[state]  (3) [below of=0,node distance=.5cm]  {$t$};
      \node[state]  (4) [right of=0,node distance=3cm] {};
      \path
        (1) edge[bend left=15]  node {$a$} (2)
        (1) edge[bend right=15] node[below] {$b$} (3)
        (2) edge[snake arrow,bend left=10] node[above] {$w\in\{a,b\}^*$} (4)
        (3) edge[snake arrow,bend right=10] node[below] {$w\in\{a,b\}^*$} (4)
        ;
    \end{tikzpicture}
    \hspace{3cm}
    \begin{tikzpicture}[baseline,->,>=stealth,auto,shorten >=1pt,node distance=2cm,
      state/.style={circle,minimum size=5mm,very thin,draw=black,initial text=}]
      \node[state]  (a) {};
      \node[state]  (aa) [right of=a]  {};
      \path
        (a) edge[loop above] node {$a$} (a)
        (a) edge node {$a$} (aa)
        ;
    \end{tikzpicture}
    \caption{Confluence (left) and the forbidden pattern of self-loop deterministic poNFAs (right)}
    \label{fig_bad_pattern}
  \end{figure}
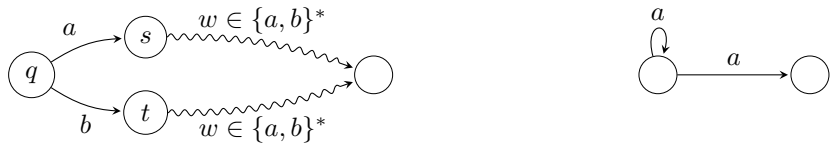

  Omitting confluence results in {\em partially ordered DFAs\/} (poDFAs) studied by Brzozowski and Fich~\cite{BrzozowskiF80}. They showed that poDFAs characterize {\em $\R$-trivial languages}, a class of languages strictly between level 1 and level ${3}/{2}$ of the ST hierarchy. Lifting the notion from DFAs to NFAs, Schwentick, Th\'erien and Vollmer~\cite{SchwentickTV01} showed that partially ordered NFAs (poNFAs) characterize level ${3}/{2}$ of the ST hierarchy. Hence poNFAs are more powerful than poDFAs. Languages of level $3/2$ are also known as {\em Alphabetical Pattern Constraints}~\cite{BMT2001}, which are regular languages effectively closed under permutation rewriting. 

  In our recent work, we showed that the increased expressivity of poNFAs is caused by self-loop transitions involved in nondeterminism. Consequently, $\R$-trivial languages are characterized by self-loop deterministic poNFAs~\cite{mfcs16:mktmmt}. A poNFA is {\em self-loop deterministic\/} if it does not contain the pattern of Figure~\ref{fig_bad_pattern} (right). Our study further revealed that complete, confluent and self-loop deterministic poNFAs characterize piecewise testable languages~\cite{ptnfas,dlt15}. An NFA is {\em complete\/} if a transition under every letter is defined in every state. Complete, confluent and self-loop deterministic poNFAs are thus a natural extension of confluent poDFAs to nondeterministic automata.
  
  In this paper, we study the {\em universality\/} problem of complete, confluent and self-loop deterministic poNFAs. The problem asks whether a given automaton accepts all words over its alphabet. 
  The study of universality (and its dual, emptiness) has a long tradition in formal languages 
  with many applications across computer science, e.g., in knowledge representation and database theory~\cite{BarceloLR:jacm14,CalvaneseGLV03:rpqreasoning,SMKR:elcq14}. The problem is \PSpace-complete for NFAs~\cite{MeyerS72}. Recent studies investigate it for specific types of regular languages, such as prefixes or factors~\cite{RampersadSX12}.

  In spite of a rather low expressivity of poNFAs, the universality problem for poNFAs has the same worst-case complexity as for general NFAs, even if restricted to binary alphabets~\cite{mfcs16:mktmmt}. This might be because poNFAs possess quite a powerful nondeterminism. The pattern of Figure~\ref{fig_bad_pattern} (right), which may occur in poNFAs, admits an unbounded number of nondeterministic steps -- the poNFA either stays in the same state or moves to another one. Forbidding the pattern results in self-loop deterministic poNFAs where the number of nondeterministic steps is bounded by the number of states. This restriction affects the complexity of universality. Deciding universality of self-loop deterministic poNFAs is \coNP-complete if the alphabet is fixed but remains \PSpace-complete if the alphabet may grow polynomially~\cite{mfcs16:mktmmt}. The growth of the alphabet thus compensates for the restriction on the number of nondeterministic steps. The reduced complexity is also preserved by complete, confluent and self-loop deterministic poNFAs if the alphabet is fixed~\cite{ptnfas} but is open if the alphabet may grow.

  We solve this problem by showing that deciding universality for complete, confluent and self-loop deterministic poNFAs is \PSpace-complete if the alphabet may grow polynomially, which strengthens our previous result for self-loop deterministic poNFAs~\cite{mfcs16:mktmmt}. 
  
  Consequently, the $k$-piecewise testability problem for complete, confluent and self-loop deterministic poNFAs, which was open~\cite{ptnfas}, is \PSpace-complete. The problem asks whether a given language is a finite boolean combination of languages of the form $\Sigma^* a_1 \Sigma^* \cdots \Sigma^* a_n \Sigma^*$, where $a_i\in \Sigma$ and $0\le n\le k$. It is of interest in XML databases and separability~\cite{HofmanM15,MartensNNS15}. The result follows from the fact that $0$-piecewise testability coincides with universality, and from the results in Masopust~\cite{ptnfas}. The problem is \coNP-complete for DFAs~\cite{KKP}.
  
  Another consequence is the worst-case complexity of the inclusion and equivalence problems for restricted NFAs, problems that are of interest, e.g., in optimization. The problems ask, given languages $K$ and $L$, whether $K\subseteq L$, resp. $K=L$. Universality can be expressed as the inclusion $\Sigma^* \subseteq L$ or as the equivalence $\Sigma^* = L$. Although equivalence means two inclusions, complexities of these two problems may differ significantly -- inclusion is undecidable for deterministic context-free languages~\cite{Friedman76} while equivalence is decidable~\cite{Senizergues97}. The complexity of universality provides a lower bound on the complexity of both. Deciding inclusion or equivalence of two languages given as complete, confluent and self-loop deterministic poNFAs is thus \PSpace-complete in general, \coNP-complete if the alphabet is fixed, and \NL-complete if the alphabet is unary, see Masopust~\cite{ptnfas} for the lower bound of the second result and Kr\"otzsch et al.~\cite{mfcs16:mktmmt_full} for the upper bound of the last two results.

  To complete the picture, H\'eam~\cite{Heam02} characterized level~${1}/{2}$ of the ST hierarchy as languages recognized by saturated poNFAs (spoNFA), also called {\em shuffle ideals\/} or {\em upward closures\/}. A poNFA is {\em saturated\/} if it has a self-loop under every letter in every state. Deciding universality for spoNFAs is simple -- it means to find a state that is both initial and accepting. 
  Therefore, complete, confluent and self-loop deterministic poNFAs are the simplest and natural kind of NFAs recognizing a well-known class of languages for which the universality problem is as difficult as for general NFAs.

  The following notation has been used for poNFAs and we adopt it in the sequel:
  \begin{itemize}
    \item self-loop deterministic poNFAs are denoted by {\em restricted poNFAs} or {\em rpoNFAs}~\cite{mfcs16:mktmmt}, and
    \item complete, confluent and self-loop deterministic poNFAs are denoted by {\em ptNFAs}~\cite{ptnfas}, where pt stands for piecewise testable.
  \end{itemize}

  \begin{table}
    \begin{center}
      \caption{Complexity of deciding universality for poNFAs and special classes thereof; ST stands for the corresponding level of the ST hierarchy; $\Sigma$ denotes the input alphabet}
      \label{table_results}
      \begin{tabular}{r|c|r@{\,}l|r@{\,}l|r@{\,}l}
          & ST
          & \multicolumn{2}{@{}c|}{$|\Sigma|=1$} 
          & \multicolumn{2}{@{}c|}{$|\Sigma|=k\ge 2$} 
          & \multicolumn{2}{@{}c}{$\Sigma$ is growing}\\
        \hline
        DFA       &
                  & L-\complete & \cite{Jones75}
                  & \NL-\complete & \cite{Jones75}
                  & \NL-\complete & \cite{Jones75}\\
        spoNFA    & $\frac{1}{2}$
                  & \ACzero & (Thm.~\ref{thmSaturated})
                  & \ACzero & (Thm.~\ref{thmSaturated})
                  & \ACzero & (Thm.~\ref{thmSaturated})\\
        ptNFA     & $1$
                  & \NL-\complete & (Thm.~\ref{thmMainNL})
                  & \coNP-\complete   & \cite{ptnfas}
                  & \PSpace-\complete & (Thm.~\ref{thmMain})\\
        rpoNFA    &
                  & \NL-\complete& \cite{mfcs16:mktmmt}
                  & \coNP-\complete  & \cite{mfcs16:mktmmt}
                  & \PSpace-\complete & \cite{mfcs16:mktmmt}\\
        poNFA     & $\frac{3}{2}$
                  & \NL-\complete & \cite{mfcs16:mktmmt}
                  & \PSpace-\complete & \cite{mfcs16:mktmmt}
                  & \PSpace-\complete & \cite{AhoHU74} \\
        NFA       &
                  & \coNP-\complete & \cite{StockmeyerM73}
                  & \PSpace-\complete & \cite{AhoHU74} 
                  & \PSpace-\complete & \cite{AhoHU74}
      \end{tabular}
    \end{center}
  \end{table}

  An overview of the results is summarized in Table~\ref{table_results}.

  All proofs or their parts omitted in the paper may be found in the appendix.

\section{Preliminaries}
  We assume that the reader is familiar with automata theory~\cite{AhoHU74}. The cardinality of a set $A$ is denoted by $|A|$ and the power set of $A$ by $2^A$. The empty word is denoted by $\eps$. For a word $w=xyz$, $x$ is a {\em prefix}, $y$ a {\em factor\/}, and $z$ a {\em suffix\/} of $w$. A prefix (factor, suffix) of $w$ is {\em proper\/} if it is different from $w$.

  Let $\A = (Q,\Sigma,\cdot,I,F)$ be a {\em nondeterministic finite automaton\/} (NFA). The language {\em accepted\/} by $\A$ is the set $L(\A) = \{w\in\Sigma^* \mid I \cdot w \cap F \neq \emptyset\}$. We often omit $\cdot$ and write $Iw$ instead of $I\cdot w$. 
  A {\em path\/} $\pi$ from a state $q_0$ to a state $q_n$ under a word $a_1a_2\cdots a_{n}$, for some $n\ge 0$, is a sequence of states and input symbols $q_0 a_1 q_1 a_2 \ldots q_{n-1} a_{n} q_n$ such that $q_{i+1} \in q_i\cdot a_{i+1}$, for all $i=0,1,\ldots,n-1$. Path $\pi$ is {\em accepting\/} if $q_0\in I$ and $q_n\in F$. We write $q_0 \xrightarrow{a_1a_2\cdots a_{n}} q_{n}$ to denote that there is a path from $q_0$ to $q_n$ under the word $a_1a_2\cdots a_{n}$. 
  Automaton $\A$ is {\em complete\/} if for every state $q$ of $\A$ and every letter $a \in \Sigma$, the set $q\cdot a$ is nonempty.
  An NFA $\A$ is {\em deterministic\/} (DFA) if $|I|=1$ and $|q\cdot a|=1$ for every state $q\in Q$ and every letter $a\in \Sigma$. 

  The reachability relation $\le$ on the set of states is defined by $p\le q$ if there exists a word $w\in \Sigma^*$ such that $q\in p\cdot w$. An NFA $\A$ is {\em partially ordered (poNFA)\/}  if the reachability relation $\le$ is a partial order. For two states $p$ and $q$ of $\A$, we write $p < q$ if $p\le q$ and $p\ne q$. A state $p$ is {\em maximal\/} if there is no state $q$ such that $p < q$. Partially ordered automata are sometimes called acyclic automata, where self-loops are allowed.
  
  A \emph{restricted partially ordered NFA (rpoNFA)} is a poNFA that is self-loop deterministic in the sense that the automaton contains no pattern of Figure~\ref{fig_bad_pattern} (right). Formally, for every state $q$ and every letter $a$, if $q\in q\cdot a$ then $q\cdot a = \{q\}$.
  
  A {\em saturated poNFA} ({\em spoNFA}) is a poNFA $\A$ such that, for every state $q$ and every letter $a$, $q\xrightarrow{~a~} q$ is a transition in $\A$.
  \begin{theorem}\label{thmSaturated}
    Universality of spoNFAs is decidable in \ACzero (hence strictly simpler than \LogSpace).
  \end{theorem}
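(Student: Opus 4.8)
The plan is to reduce universality of an spoNFA to the trivial test $I\cap F\neq\emptyset$. First I would prove the characterization: a saturated poNFA $\A=(Q,\Sigma,\cdot,I,F)$ is universal if and only if $I\cap F\neq\emptyset$. The backward direction uses saturation directly. If $q\in I\cap F$, then for every word $w=a_1a_2\cdots a_n$ the self-loops $q\xrightarrow{~a_i~}q$ compose to a path $q\xrightarrow{~w~}q$, so $q\in Iw\cap F$ and hence $w\in L(\A)$; as $w$ is arbitrary, $L(\A)=\Sigma^*$. The forward direction is even simpler: if $\A$ is universal then $\eps\in L(\A)$, and since $I\eps=I$ this means $I\cap F\neq\emptyset$ by the definition of acceptance.

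With the characterization in hand, the complexity claim becomes a circuit-complexity observation. Assuming the standard encoding in which membership of a state in $I$ and in $F$ can be read off locally from the input (e.g.\ as indicator bits), deciding $I\cap F\neq\emptyset$ amounts to evaluating $\bigvee_{q\in Q}\bigl(q\in I\wedge q\in F\bigr)$. This is a single unbounded-fan-in $\mathrm{OR}$ of two-input $\mathrm{AND}$ gates, hence a constant-depth polynomial-size Boolean circuit, placing the problem in \ACzero. The strictness over \LogSpace then follows from the known separation $\ACzero\subsetneq\LogSpace$: the PARITY function lies in \LogSpace but not in \ACzero by the lower bounds of Furst--Saxe--Sipser and H\aa stad, so \ACzero is a proper subclass.

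I expect no serious obstacle, since this theorem is genuinely easy once the characterization is isolated. The only point requiring a little care is making explicit that the assumed input presentation exposes the sets $I$ and $F$ in a way the \ACzero circuit can query, because \ACzero is sensitive to encoding details in a way that \LogSpace reductions are not. Under any reasonable presentation of the automaton this is immediate, and the characterization $I\cap F\neq\emptyset$ itself is robust to the choice of encoding, so the argument goes through uniformly.
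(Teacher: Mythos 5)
Your proposal is correct and follows essentially the same route as the paper: the characterization ``universal iff some state is both initial and accepting'' (with saturation giving acceptance of every word via self-loops, and $\eps$-membership giving the converse), followed by a constant-depth unbounded-fan-in circuit checking the initial/accepting bits. You additionally make explicit the \ACzero{} versus \LogSpace{} separation via PARITY, which the paper only asserts parenthetically.
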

  \begin{proof}
    The language of an spoNFA is universal if and only if $\eps$ belongs to it, which is if and only if one of the initial states is accepting. Given a binary encoding of states and their properties, a family of uniform constant-depth circuits can check if any state has initial and accepting bits set. Unbounded fan-in gates allow us to test any number of (initial) states.
  \end{proof}

\section{Confluent and Self-Loop Deterministic poNFAs -- ptNFAs}
  A poNFA $\A$ over $\Sigma$ with the state set $Q$ can be turned into a directed graph $G(\A)$ with the set of vertices $Q$ where a pair $(p,q) \in Q \times Q$ is an edge in $G(\A)$ if there is a transition from $p$ to $q$ in $\A$. For an alphabet $\Gamma \subseteq \Sigma$, we define the directed graph $G(\A,\Gamma)$ with the set of vertices $Q$ by considering only those transitions corresponding to letters in $\Gamma$. 
  For a state $p$, let $\Sigma(p)=\{a\in\Sigma \mid p\xrightarrow{\,a\,} p\}$ denote all letters labeling self-loops in $p$. 
  We say that $\A$ satisfies the {\em unique maximal state\/} (UMS) property if, for every state $q$ of $\A$, state $q$ is the unique maximal state of the connected component of $G(\A,\Sigma(q))$ containing $q$.
  
  An NFA $\A$ is a {\em ptNFA\/} if it is partially ordered, complete and satisfies the UMS property.

  An equivalent notion to the UMS property for DFAs is confluence~\cite{KlimaP13}. A DFA $\D$ over $\Sigma$ is {\em (locally) confluent\/} if, for every state $q$ of $\D$ and every pair of letters $a, b \in \Sigma$, there is a word $w \in \{a, b\}^*$ such that $(q a) w = (q b) w$. We generalize this notion to NFAs as follows. An NFA $\A$ over $\Sigma$ is {\em confluent\/} if, for every state $q$ of $\A$ and every pair of (not necessarily distinct) letters $a, b \in \Sigma$, if $s \in qa$ and $t\in qb$, then there is a word $w \in \{a, b\}^*$ such that $sw \cap tw \neq\emptyset$. The following relationship between ptNFAs and rpoNFAs holds~\cite{ptnfas}.
  
  \begin{lemma}\label{ptNFAvsrpoNFA}
    Complete and confluent rpoNFAs are exactly ptNFAs.
  \end{lemma}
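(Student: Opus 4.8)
The plan is to observe that both classes are, by definition, complete partially ordered NFAs, so it suffices to prove that for a complete poNFA $\A$ the conjunction of self-loop determinism and confluence is equivalent to the UMS property. One implication is almost immediate. If $\A$ has the UMS property and $q\in q\cdot a$ while some $p\in q\cdot a$ satisfies $p\neq q$, then $a\in\Sigma(q)$, so the transition $q\xrightarrow{a}p$ lives in $G(\A,\Sigma(q))$ and witnesses $q<p$ inside the component of $q$, contradicting that $q$ is its maximal state; hence $\A$ is self-loop deterministic. Conversely, self-loop determinism forces $q\cdot a=\{q\}$ for every $a\in\Sigma(q)$, so $q$ has no outgoing edge to a different state in $G(\A,\Sigma(q))$ and is therefore always maximal in its component; the real content of UMS is thus the \emph{uniqueness} of this maximal state.

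For the direction UMS $\Rightarrow$ confluence, I would first prove a ``sink'' lemma: reading the word $w^\ast=(ab)^{|Q|}$ from any state, every path ends in an $\{a,b\}$-stable state (one carrying self-loops under both $a$ and $b$). This is a termination argument exploiting completeness and the finite order: each letter either self-loops or strictly ascends, so after at most $|Q|-1$ ascents all further letters must self-loop, and since the tail of $w^\ast$ still contains both $a$ and $b$ the final state carries both self-loops. Next I would use UMS to show that the $\{a,b\}$-stable states reachable from a fixed state $q$ under $\{a,b\}^\ast$ are unique: if $m_1,m_2$ are two such states, then $q,m_1,m_2$ lie in one (weakly) connected component of $G(\A,\{a,b\})$, which is contained in the component of $m_i$ in $G(\A,\Sigma(m_i))$ because $\{a,b\}\subseteq\Sigma(m_i)$; UMS at $m_1$ then gives $m_2\le m_1$ and UMS at $m_2$ gives $m_1\le m_2$, so antisymmetry of the reachability order yields $m_1=m_2$. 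Combining the two facts, for $s\in q\cdot a$ and $t\in q\cdot b$ every path under $w^\ast$ from $s$ and from $t$ ends at the \emph{same} unique stable state $m$, so $s\cdot w^\ast=t\cdot w^\ast=\{m\}$ and confluence holds with $w=w^\ast$.

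For the converse, confluence (together with self-loop determinism) $\Rightarrow$ UMS, I would recognise this as a Newman's-lemma situation. Orient the transitions inside $G(\A,\Sigma(q))$ as a rewrite relation in which $p\to p'$ whenever $p'\in p\cdot c$ for some $c\in\Sigma(q)$ with $p'\neq p$. This relation is terminating because $p\to p'$ implies $p<p'$ in the finite partial order, and it is locally confluent because the confluence hypothesis applied to two one-step divergences under letters $c,d\in\Sigma(q)$ supplies a joining word over $\{c,d\}\subseteq\Sigma(q)$, which stays inside $G(\A,\Sigma(q))$. Newman's lemma then upgrades local to global confluence, i.e.\ the Church--Rosser property, on the component. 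Any two maximal states $q$ and $p$ of a weakly connected component are linked by an undirected zigzag of transitions, hence are equivalent in the reflexive-symmetric-transitive closure, so Church--Rosser produces a common reduct; but a maximal state reduces only to itself, forcing $q=p$. Thus the component has a unique maximal state and UMS holds.

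The main obstacle I anticipate is in the UMS $\Rightarrow$ confluence direction: confluence demands a single word $w\in\{a,b\}^\ast$ that simultaneously joins $s$ and $t$, whereas UMS only hands back reachability of maximal states via words over the possibly larger alphabets $\Sigma(\cdot)$ and along unrelated paths. The device that overcomes this is the uniform sink word $w^\ast=(ab)^{|Q|}$, which drives every path to a stable state, combined with the uniqueness of the reachable stable state; together they collapse the nondeterminism and make $s\cdot w^\ast$ and $t\cdot w^\ast$ the same singleton. In the opposite direction the only subtlety is justifying the local-to-global confluence step, which is exactly where well-foundedness of the partial order (termination) is needed to invoke Newman's lemma.
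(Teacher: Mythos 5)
Your proof is correct, and in one of the two directions it takes a genuinely different route from the paper's. For UMS $\Rightarrow$ (self-loop determinism and confluence), your core argument coincides with the paper's: the forbidden pattern immediately contradicts UMS, and two $\{a,b\}$-stable states reachable from a common state are identified by applying UMS at each of them and using antisymmetry of the reachability order. What you add is the uniform sink word $w^*=(ab)^{|Q|}$: the paper only exhibits a common maximal state $s'=t'$ reachable from $s\in qa$ and $t\in qb$, leaving implicit how to obtain a \emph{single} word $w\in\{a,b\}^*$ with $sw\cap tw\neq\emptyset$ (the two paths a priori use different words); your sink lemma closes exactly this gap, and you correctly flagged it as the delicate point. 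One detail to tighten there: the claim that ``the tail of $w^*$ still contains both $a$ and $b$'' does not follow from the bound of $|Q|-1$ ascents alone, since nothing yet rules out an ascent at the very last step; you also need the observation that a path can self-loop at most once at a non-stable state (two consecutive self-loops on alternating letters force self-loops under both letters, hence stability and no further ascent), which places the last ascent within the first $2|Q|-2$ steps. In the converse direction, (self-loop determinism and confluence) $\Rightarrow$ UMS, your argument is genuinely different: the paper runs a bespoke extremal contradiction -- take the $\le$-largest state $r$ from which two distinct maximal states are reachable, extract distinct successors under letters $a,b\in\Sigma(q)\setminus\Sigma(r)$, apply confluence, and kill all three cases for the maximal state above the resulting common descendant -- whereas you orient the non-self-loop transitions of $G(\A,\Sigma(q))$ as a rewrite relation, note termination (from the partial order) and local confluence (from the NFA confluence hypothesis), and invoke Newman's lemma and Church--Rosser so that any two maximal states of a weakly connected component share a reduct and hence coincide. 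The two are morally related, since the paper's extremal choice of $r$ is a hand-rolled well-founded induction, but your version is more modular and avoids the case analysis, at the price of citing Newman's lemma where the paper is self-contained.
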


  As a result, complete and confluent rpoNFAs characterize piecewise testable languages~\cite{ptnfas}, a class of languages that recently re-attracted attention because of some applications in logic on words~\cite{PlaceZ_icalp14} and in XML Schema languages~\cite{icalp2013,HofmanM15,MartensNNS15}; see also Masopust~\cite{ptnfas} for a brief overview of their applications in mathematics and computer science.
  
  We now study the universality problem for ptNFAs. Recall that if the alphabet is fixed, deciding universality for ptNFAs is \coNP-complete and that hardness holds even if restricted to binary alphabets~\cite{ptnfas}. If the alphabet is unary, universality for ptNFAs is decidable in polynomial time~\cite{mfcs16:mktmmt}. We now show that it is \NL-complete.
  
  \begin{theorem}\label{thmMainNL}
    The universality problem for ptNFAs over a unary alphabet is \NL-complete.
  \end{theorem}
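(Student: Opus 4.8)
The plan is to prove the two directions separately, and the membership argument already determines the right structural lemma. The first step is the observation that over $\Sigma=\{a\}$ a state $q$ carries a self-loop exactly when $q\cdot a=\{q\}$ (by self-loop determinism), so self-loop states are sinks, and by completeness and partial order every maximal state is such a sink. From this I would extract the key claim: a ptNFA $\A$ with state set $Q$ is universal if and only if $a^n\in L(\A)$ for all $n$ with $0\le n\le |Q|$. One direction is trivial. For the converse, note that if $a^{|Q|}\in L(\A)$, then an accepting path of length $|Q|$ visits $|Q|+1$ states, so two of them coincide; as the only cycles are self-loops, the path must reach a \emph{final} self-loop state $f$ within at most $|Q|$ steps and then idle there. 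Hence every length beyond that threshold is accepted, and the finitely many remaining short lengths are covered by hypothesis, giving universality. Notably this does not even use confluence, so it holds for the ambient class and a fortiori for ptNFAs.

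With this characterization, membership in $\NL$ is routine. For a fixed $n\le |Q|$, deciding $a^n\in L(\A)$ is in $\NL$: a nondeterministic logspace machine guesses a path from an initial state one transition at a time, maintaining only the current state and a step counter bounded by $|Q|$ (hence $O(\log|Q|)$ bits), and accepts iff it lands in $F$ after exactly $n$ steps. Consequently the predicate $a^n\notin L(\A)$ lies in $\complclass{coNL}=\NL$. Non-universality, namely $\exists\, n\le|Q|$ with $a^n\notin L(\A)$, is then decided by guessing $n$ in binary and running the $\NL$ test, so it is in $\NL$; since $\NL$ is closed under complement (Immerman--Szelepcsényi), universality is in $\NL$.

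For $\NL$-hardness I would reduce from reachability in a layered DAG, which is $\NL$-complete. Given a digraph with source $s$ and target $t$ on $N$ vertices, the standard logspace layering creates vertices $(v,i)$ for $0\le i\le N$ with edges $(v,i)\to(w,i+1)$ whenever $v\to w$ or $v=w$; then $(t,N)$ is reachable from $(s,0)$ iff $t$ is reachable from $s$, and every such path has length exactly $N$. I build $\A$ over $\{a\}$ from two parts with initial set $I=\{(s,0),p_0\}$. The \emph{gadget} is this layered DAG with initial $(s,0)$ and single final state $(t,N)$, augmented with one fresh non-final self-loop sink $d$ to which the top layer and $(t,N)$ are routed; it accepts $a^N$ iff $t$ is reachable, and no other word. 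The \emph{spine} is a deterministic chain $p_0\to\cdots\to p_{N}\to r$ with $r$ a final self-loop sink, where $p_0,\dots,p_{N-1}$ are final and $p_N$ is not; it accepts exactly $\{a^n : n\ne N\}$. Thus $L(\A)=\Sigma^*$ iff $t$ is reachable from $s$.

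The hard part will be confluence: the resulting $\A$ must be a genuine ptNFA, i.e.\ complete, self-loop deterministic, and confluent (equivalently satisfy the UMS property), and encoding reachability naturally produces branches that need not reconverge. Completeness and self-loop determinism are immediate from the funneling construction; the delicate point is that confluence forbids non-reconverging branches. This is resolved precisely by the sink $d$: via the waiting edges every gadget state reaches $d$, so for any state and any two of its successors, both reach the self-loop sink $d$ and therefore share $d$ after sufficiently many letters, establishing confluence at every gadget state. The spine is deterministic and hence trivially confluent, and the two accepting sinks cause no trouble, because the final $(t,N)$ is \emph{not} a sink (it is routed to $d$, so branching into it still reconverges at $d$), while the accepting sink $r$ sits on a deterministic path into which nothing branches. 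The reduction is clearly logspace, and together with the membership argument this yields $\NL$-completeness.
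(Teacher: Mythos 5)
Your proposal is correct, but both halves deviate from the paper's proof in ways worth noting. For membership, the paper simply cites the known \NL{} upper bound for universality of \emph{unary poNFAs}, whereas you re-derive it from scratch via the threshold lemma (universal iff $a^n\in L(\A)$ for all $n\le|Q|$, using the fact that in a partially ordered automaton a repeated state on a path forces a self-loop, which under self-loop determinism is a sink) plus \NL{}$=$co\NL; this is self-contained and, as you observe, needs neither confluence nor even self-loop determinism in its pumping form. For hardness, both proofs reduce from directed reachability, but the gadgets are genuinely different. The paper embeds the DAG $G$ \emph{unchanged}, makes all $G$-states accepting, and appends a non-accepting delay chain $f_1,\dots,f_{n-1}$ into which every non-$t$ state feeds, ending in the unique self-loop sink $t$; the whole automaton is one weakly connected component funnelling into $t$, so the UMS property is immediate, the automaton has $O(n)$ states and one initial state, and the rejected witness when $t$ is unreachable is $a^{n-1}$ (acceptance at $G\setminus\{t\}$ needs length $\le n-2$, acceptance at $t$ needs length $\ge n$). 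You instead layer the DAG so that every accepting path has length exactly $N$, route the gadget into a \emph{non-final} sink $d$, and add a disjoint spine DFA accepting exactly $\{a^n : n\neq N\}$, with two initial states and two weakly connected components, each with its own unique maximal sink. Your construction costs $O(N^2)$ states instead of $O(n)$, but it buys modularity: the gadget's language is exactly $\{a^N\}$ or $\emptyset$, the spine's language is fixed, and the universality analysis is a trivial union; it also makes explicit the useful observation that UMS is a per-component condition, so disjoint unions of unary ptNFAs (each funnelling into its own sink) remain ptNFAs --- a trick the paper's single-sink construction never needs. Both reductions are logspace and both verifications of completeness, self-loop determinism, and confluence go through, so the result stands either way.
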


  In the case the alphabet may grow polynomially, the universality problem for ptNFAs is open. In the rest of this paper we solve this problem by showing the following result.
  \begin{reptheorem}{thmMain}
    The universality problem for ptNFAs is \PSpace-complete.
  \end{reptheorem}
  
  A typical proof showing \PSpace-hardness of universality for NFAs is to take a $p$-space bounded deterministic Turing machine $\M$, for a polynomial $p$, together with an input $x$, and to encode the computations of $\M$ on $x$ as words over some alphabet $\Sigma$ that depends on the alphabet and the state set of $\M$. One then constructs a regular expression (or an NFA) $R_x$ representing all computations that do not encode an accepting run of $\M$ on $x$. That is, $L(R_x)=\Sigma^*$ if and only if $\M$ does not accept $x$~\cite{AhoHU74,BexGMN09,mfcs16:mktmmt}.
  The form of $R_x$ is relatively simple, consisting of a union of expressions of the form
  \begin{equation}\label{eq_part}
    \Sigma^* \, K \, \Sigma^*
  \end{equation}
  where $K$ is a finite language with words of length bounded by $O(p(|x|))$. 
  Intuitively, $K$ encodes possible violations of a correct computation of $\M$ on $x$, such as the initial configuration does not contain the input $x$, or the step from a configuration to the next one does not correspond to any rule of $\M$. These checks are local, involving at most two consecutive configurations of $\M$, each of polynomial size. They can therefore be encoded as a finite language with words of polynomial length. The initial $\Sigma^*$ of \eqref{eq_part} nondeterministically guesses a position in the word where a violation encoded by $K$ occurs, and the last $\Sigma^*$ reads the rest of the word if the violation check was successful.
  
  This idea cannot be directly used to prove Theorem~\ref{thmMain} for two reasons:
  \begin{description}
    \item[(i)] Although expression~\eqref{eq_part} can easily be translated to a poNFA, it is not true for ptNFAs. The translation of the leading part $\Sigma^* K$ may result in the forbidden pattern of Figure~\ref{fig_bad_pattern};
    \item[(ii)] The constructed poNFA may be incomplete and its ``standard'' completion by adding the missing transitions to a new sink state may violate the UMS property.
  \end{description}

  A first observation to overcome these problems is that the length of the encoding of a computation of $\M$ on $x$ is at most exponential with respect to the size of $\M$ and $x$. It would therefore be sufficient to replace the initial $\Sigma^*$ in \eqref{eq_part} by prefixes of an exponentially long word. However, such a word cannot be constructed by a polynomial-time reduction. Instead, we replace $\Sigma^*$ with a ptNFA encoding such a word, which exists and is of polynomial size as shown in Lemma~\ref{exprponfas}. There we construct, in polynomial time, a ptNFA $\A_{n,n}$ that accepts all words but a single one, $W_{n,n}$, of exponential length. 

  Since language $K$ of \eqref{eq_part} is finite, there is a ptNFA for $K$. For every state of $\A_{n,n}$, we make a copy of the ptNFA for $K$ and identify its initial state with the state of $\A_{n,n}$ if it does not violate the forbidden pattern of Figure~\ref{fig_bad_pattern}; see Figure~\ref{ideaIllustration} for an illustration. We keep track of the words read by both $\A_{n,n}$ and the ptNFA for $K$ by taking the Cartesian product of their alphabets. A letter is then a pair of symbols, where the first symbol is the input for $\A_{n,n}$ and the second is the input for the ptNFA for $K$. A word over this alphabet is accepted if the first components do not form $W_{n,n}$ or the second components form a word that is not a correct encoding of a run of $\M$ on $x$. This results in an rpoNFA that overcomes problem {\sffamily\bfseries(i)}. 
  
  \begin{figure}
    \centering
    \includegraphics[scale=.8]{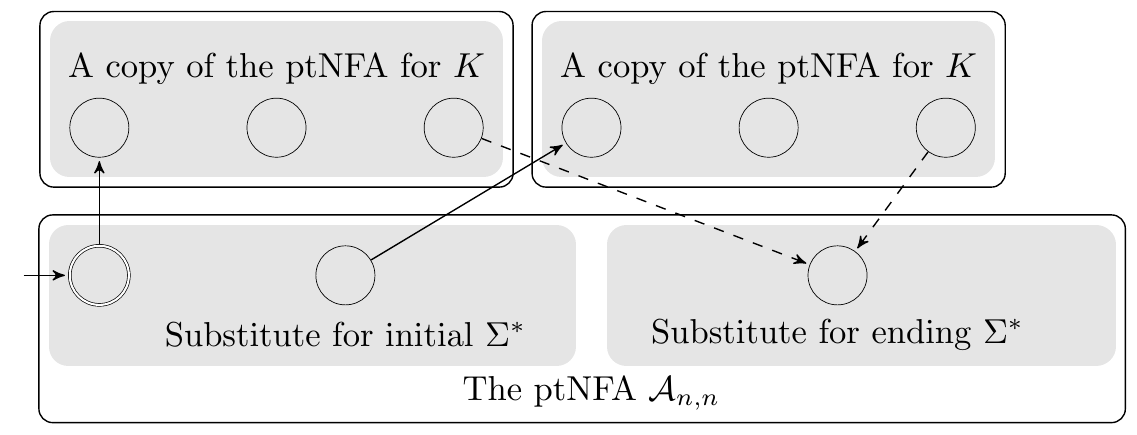}
    \caption{Construction of an rpoNFA (solid edges) solving problem {\sffamily\bfseries(i)}, illustrated for two copies of the ptNFA for $K$, and its completion to a ptNFA (dashed edges) solving problem {\sffamily\bfseries(ii)}}
    \label{ideaIllustration}
  \end{figure}

  However, this technique is not sufficient to resolve problem {\sffamily\bfseries(ii)}. Although the construction yields
  an rpoNFA that is universal if and only if the regular expression $R_x$ is~\cite{mfcs16:mktmmt}, the rpoNFA is incomplete and its ``standard'' completion by adding the missing transitions to an additional sink state violates the UMS property. According to Lemma~\ref{ptNFAvsrpoNFA}, to construct a ptNFA from the rpoNFA, we need to complete the latter so that it is confluent. This is not possible for every rpoNFA, but it is possible for our case because the length of the input that is of interest is bounded by the length of $W_{n,n}$. The maximal state of $\A_{n,n}$ is accepting, therefore all the missing transitions can be added so that the paths required by confluence meet in the maximal state of $\A_{n,n}$. Since all words longer than $|W_{n,n}|$ are accepted by $\A_{n,n}$, we could complete the rpoNFA by adding paths to the maximal state of $\A_{n,n}$ that are longer than $|W_{n,n}|$. However, this cannot be done by a polynomial-time reduction, since the length of $W_{n,n}$ is exponential. Instead, we add a ptNFA to encode such paths in the formal definition of $\A_{n,n}$ as given in Lemma~\ref{exprponfas} below. 
  We then ensure confluence by adding the missing transitions to states of the ptNFA $\A_{n,n}$ from which the unread part of $W_{n,n}$ is not accepted and from which the maximal state of $\A_{n,n}$ is reachable under the symbol of the added transition (cf. Corollary~\ref{WnnStructure}). The second condition ensures confluence, since all the transitions meet in the maximal state of $\A_{n,n}$. The idea is illustrated in Figure~\ref{ideaIllustration}. The details follow.
  
  By this construction, we do not get the same language as defined by the regular expression $R_x$, but the language of the constructed ptNFA is universal if and only if $R_x$ is, which suffices. 

  Thus, the first step is to construct the ptNFA $\A_{n,n}$ that accepts all words but $W_{n,n}$ of exponential length. This automaton is the core of the proof of Theorem~\ref{thmMain} and its construction is described in the following lemma. The language considered there is the same as in our previous work~\cite[Lemma~17]{mfcs16:mktmmt}, where the constructed automaton is not a ptNFA.
  \begin{lemma}\label{exprponfas}
    For all integers $k,n\geq 1$, there exists a ptNFA $\A_{k,n}$ over an $n$-letter alphabet with $n(2k+1)+1$ states, such that the unique non-accepted word of $\A_{k,n}$ is of length $\binom{k+n}{k}-1$.
  \end{lemma}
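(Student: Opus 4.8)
The plan is to construct $\A_{k,n}$ explicitly and verify the ptNFA conditions, using the alphabet size $n$ as the parameter. First I would fix $\Sigma=\{a_1,\dots,a_n\}$ and define the candidate rejected word $W_{k,n}$ recursively by $W_{k,1}=a_1^{k}$, $W_{0,n}=\eps$, and
\[
  W_{k,n}=W_{0,n-1}\,a_n\,W_{1,n-1}\,a_n\cdots a_n\,W_{k,n-1},
\]
that is, the blocks $W_{0,n-1},\dots,W_{k,n-1}$ separated by $k$ occurrences of the fresh letter $a_n$. This gives $|W_{k,n}|+1=\sum_{j=0}^{k}\bigl(|W_{j,n-1}|+1\bigr)$, and a routine induction together with the hockey-stick identity $\sum_{j=0}^{k}\binom{n-1+j}{j}=\binom{n+k}{k}$ yields the claimed length $|W_{k,n}|=\binom{k+n}{k}-1$.

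Next I would recall the self-loop deterministic poNFA $B_{k,n}$ recognizing $\Sigma^{*}\setminus\{W_{k,n}\}$ from \cite[Lemma~17]{mfcs16:mktmmt}. It lays out $n$ rows of $2k+1$ states, one per letter, together with a single accepting state $\top$ that is its unique maximal state, arranged so that the \emph{only} input driving the automaton away from $\top$ forever spells exactly $W_{k,n}$; self-loops compress the exponentially long runs, so the whole automaton uses $n(2k+1)+1$ states. The difficulty, which is \emph{problem} (ii) of the introduction, is that $B_{k,n}$ is incomplete and that its standard completion to a fresh sink destroys the UMS property. By Lemma~\ref{ptNFAvsrpoNFA} it therefore suffices to complete $B_{k,n}$ to a \emph{complete and confluent} rpoNFA on the \emph{same} state set, adding transitions only.

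The key step, which I expect to be the main obstacle, is to choose the targets of the missing transitions so that three conditions hold simultaneously: the automaton stays partially ordered and self-loop deterministic, the unique rejected word remains $W_{k,n}$, and confluence (UMS) is restored. Since adding transitions can only enlarge the language, $L(\A_{k,n})$ is either $\Sigma^{*}\setminus\{W_{k,n}\}$ or $\Sigma^{*}$, so correctness reduces to showing that the completion creates no accepting run on $W_{k,n}$. I would therefore add each missing transition from a state $q$ under a letter $a$ (i.e.\ where $q\cdot a=\emptyset$) by directing it to the accepting maximal state $\top$, and only when this edge does not complete any accepting run on $W_{k,n}$. Routing strictly into $\top$ preserves the partial order (nothing leaves $\top$, so no cycle is created), and it cannot produce the forbidden pattern of Figure~\ref{fig_bad_pattern}, because a transition is added only at states that have no $a$-transition at all, hence no self-loop on $a$. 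Confluence is what this choice secures: for any state and any two letters $a,b$, both successors can be driven to the single accepting state $\top$ by a word over $\{a,b\}$, since every added edge and every relevant original edge funnels into $\top$.

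To make the safety condition precise I would first establish the structural description recorded in Corollary~\ref{WnnStructure}: for each prefix of $W_{k,n}$ it identifies the reached states and those from which the corresponding suffix is not accepted, which is exactly the bookkeeping needed to decide at which state and letter an edge to $\top$ may be added without completing a run on $W_{k,n}$ (intuitively, we must refuse the edge precisely when $a$ continues a prefix of $W_{k,n}$ that reaches $q$). With this in hand the remaining verification is routine: completeness holds by construction, the partial order and self-loop determinism are preserved as noted, confluence follows via Lemma~\ref{ptNFAvsrpoNFA}, no new state is introduced so the count stays $n(2k+1)+1$, and the unique non-accepted word is $W_{k,n}$ of length $\binom{k+n}{k}-1$. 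The genuinely delicate point throughout is the balancing act in the completion step, namely adding enough edges to $\top$ to force confluence while adding few enough that $W_{k,n}$ is still rejected.
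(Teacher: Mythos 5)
There is a genuine gap, and it sits exactly at the point you yourself flag as the ``balancing act.'' Your completion rule is: add a missing transition $q\xrightarrow{\,a\,}\top$ only when this does not complete an accepting run on $W_{k,n}$, and refuse the edge otherwise. But a ptNFA must be \emph{complete}, so refusing an edge is not an option: every missing state/letter pair must receive some target. Now look at the critical case. With $W_{k,n}=W_{k,n-1}\,a_n\,W_{k-1,n}$, after reading the prefix $W_{k,n-1}$ the incomplete automaton of \cite{mfcs16:mktmmt} is (among others) in a non-accepting state that has no transition under $a_n$ --- this is precisely where it is incomplete. Routing that transition to $\top$ accepts $W_{k,n}$, because the remaining suffix $W_{k-1,n}$ is absorbed by the self-loops at $\top$; so your safety condition forbids the edge, and the automaton stays incomplete. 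Hence your strategy can never satisfy completeness and language-preservation simultaneously: the dilemma is not delicate, it is unsolvable with $\top$ as the only target. (Your premise that the automaton of \cite{mfcs16:mktmmt} already has $n(2k+1)+1$ states is also incorrect; compare Corollary~\ref{cor_nonacc}, which shows that $nk$ of the states of $\A_{k,n}$ are redundant for the language and are there for a different reason.)

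The missing idea is that the state set must be enlarged: the paper adds, for each letter $a_i$, a chain of $k$ fresh \emph{non-accepting} states $(k+1;i)\to(k+2;i)\to\cdots\to(2k;i)\to max$ traversed under $a_i$, with self-loops under all $a_j$, $j<i$. The dangerous missing transitions under $a_i$ are routed into $(k+1;i)$, \emph{not} into $max$. Confluence (UMS) still holds because these chains funnel into the unique maximal state $max$, but acceptance from $(k+1;i)$ now requires $k$ further occurrences of $a_i$ interleaved only with smaller letters, and Corollary~\ref{WnnStructure} --- whose role you correctly anticipate, but use for the wrong purpose --- shows that no suffix of $W_{k,n}$ supplies that many. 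In other words, the construction works because every ``repair'' edge imposes a delay of $k$ steps before the accepting sink, whereas your proposal corresponds to a delay of $0$, which is exactly the one value that cannot work. This is also why the lemma's state count is $n(2k+1)+1$ rather than the roughly $n(k+1)+1$ states needed for the language alone.
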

  \begin{proof}
    For positive integers $k$ and $n$, we recursively define words $W_{k,n}$ over the alphabet $\Sigma_n = \{a_1,a_2,\ldots, a_n\}$ as follows. For the base cases, we set $W_{k,1} = a_1^k$ and $W_{1,n} = a_1a_2\ldots a_n$. The cases for $k,n >1$ are defined recursively by setting
    \begin{align}
      W_{k,n} & = W_{k,n-1}\, a_{n}\, W_{k-1,n}
         = W_{k,n-1}\, a_n\, W_{k-1,n-1}\, a_n\, \cdots\, a_n\, W_{1,n-1}\, a_n\,.
    \end{align}
    The length of $W_{k,n}$ is $\binom{k+n}{n}-1$~\cite{dlt15}. Notice that letter $a_n$ appears exactly $k$ times in $W_{k,n}$. We further set $W_{k,n}=\eps$ whenever $kn=0$, since this is useful for defining $\A_{k,n}$ below. 

    We construct a ptNFA $\A_{k,n}$ over $\Sigma_n$ that accepts the language $\Sigma_n^* \setminus \{W_{k,n}\}$. For $n=1$ and $k\ge 0$, let $\A_{k,1}$ be a DFA for $\{a_1\}^* \setminus \{a_1^k\}$ with $k$ additional unreachable states used to address problem {\sffamily\bfseries(ii)} and included here for uniformity (see Corollary~\ref{WnnStructure}). $\A_{k,1}$ consists of $2k+1$ states of the form $(i;1)$ and a state {\em max}, as shown in the top-most row of states in Figure~\ref{ptnfa3}, together with the given $a_1$-transitions. All states but $(i;1)$, for $i=k,\ldots,2k$, are accepting, and $(0;1)$ is initial. All undefined transitions in Figure~\ref{ptnfa3} go to state {\em max}.

    Given a ptNFA $\A_{k,n-1}$, we recursively construct $\A_{k,n}$ as defined next. The construction for $n=3$ is illustrated in Figure~\ref{ptnfa3}. We obtain $\A_{k,n}$ from $\A_{k,n-1}$ by adding $2k+1$ states $(0;n),(1;n),\ldots,(2k;n)$, where $(0;n)$ is added to the initial states, and all states $(i;n)$ with $i< k$ are added to the accepting states. The automaton $\A_{k,n}$ therefore has $n(2k+1) + 1$ states.
    \begin{figure}
      \centering
        \includegraphics[scale=.46]{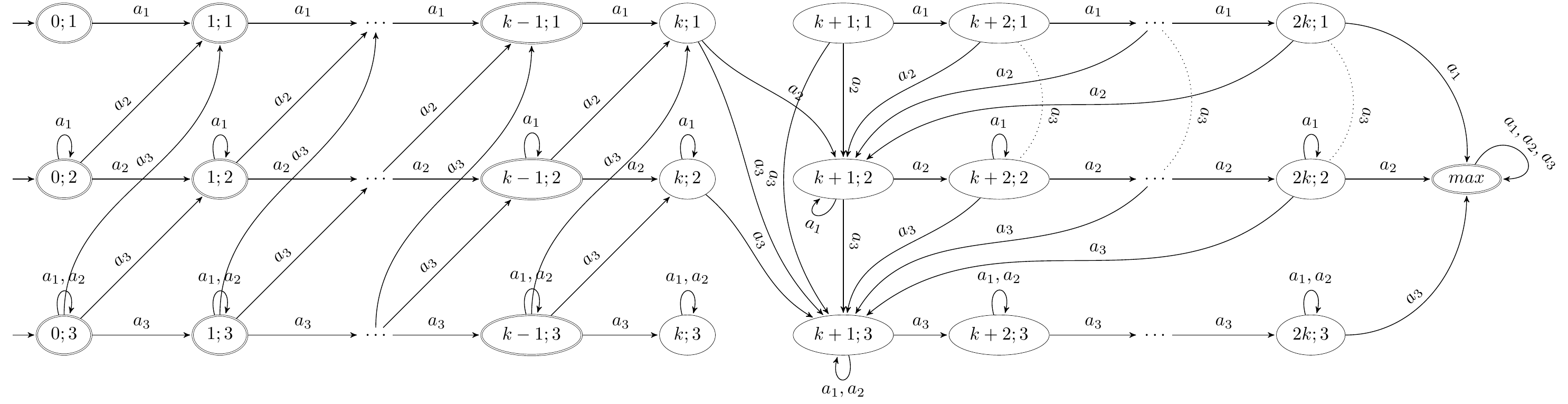}
        \caption{The ptNFA $\A_{k,3}$ with $3(2k+1) + 1$ states; all undefined transitions go to state {\em max}; dotted lines depict arrows from $(k+i,1)$ to $(k+1,3)$ under $a_3$, for $i=2,3,\ldots,k$}
        \label{ptnfa3}
    \end{figure}
    The additional transitions of $\A_{k,n}$ consist of the following groups:
    \begin{enumerate}
      \item\label{r1} Self-loops $(i;n)\xrightarrow{a_j}(i;n)$ for every $i\in\{0,1,\ldots,2k\}$ and $a_j=a_1,a_2,\ldots,a_{n-1}$;
        
      \item\label{r2} Transitions $(i;n)\xrightarrow{a_n}(i+1;n)$ for every $i\in\{0,1,\ldots,2k-1\}\setminus\{k\}$;
        
      \item\label{r3} Transitions $(k,n)\xrightarrow{a_n} max$ and $(2k,n)\xrightarrow{a_n} max$, and the self-loop $max \xrightarrow{a_n} max$;
        
      \item\label{r4} Transitions $(i;n)\xrightarrow{a_n}(i+1;m)$ for every $i=0,1,\ldots,k-1$ and $m=1,\ldots,n-1$;
        
      \item\label{r5} Transitions $(i;m)\xrightarrow{a_n} max$ for every accepting state $(i;m)$ of $\A_{k,n-1}$;
        
      \item\label{r6} Transitions $(i;m)\xrightarrow{a_n} (k+1,n)$ for every non-accepting state $(i;m)$ of $\A_{k,n-1}$.
    \end{enumerate}

    By construction, $\A_{k,n}$ is complete and partially ordered. It satisfies the UMS property because if there is a self-loop in a state $q\neq max$ under a letter $a$, then there is no other incoming or outgoing transition of $q$ under $a$. This means that the component of the graph $G(\A_{k,n},\Sigma(q))$ containing $q$ is only state $q$, which is indeed the unique maximal state. Hence, it is a ptNFA. Equivalently, to see that the automaton is confluent, the reader may notice that the automaton has a single sink state.
      
    We show that $\A_{k,n}$ accepts $\Sigma_n^*\setminus \{W_{k,n}\}$. The additional states of $\A_{k,n}$ and transitions {\sffamily\bfseries\ref{r1}}, {\sffamily\bfseries\ref{r2}}, and {\sffamily\bfseries\ref{r3}} ensure acceptance of every word that does not contain exactly $k$ occurrences of $a_n$.
    The transitions {\sffamily\bfseries\ref{r4}} and {\sffamily\bfseries\ref{r5}} ensure acceptance of all words in $(\Sigma_{n-1}^* a_n)^{i} L(\A_{k-i,n-1})a_n \Sigma_n^*$, for which the longest factor before the $(i+1)$th occurrence of $a_n$ is not of the form $W_{k-i,n-1}$, hence not a correct factor of $W_{k,n} = W_{k,n-1} a_n \cdots a_n W_{k-i,n-1} a_n \cdots a_n W_{1,n-1} a_n$.
    Together, these conditions ensure that $\A_{k,n}$ accepts every input other than $W_{k,n}$.

    It remains to show that $\A_{k,n}$ does not accept $W_{k,n}$, which we do by induction on $(k,n)$. We start with the base cases. For $(0,n)$ and any $n\geq 1$, the word $W_{0,n}=\eps$ is not accepted by $\A_{0,n}$, since the initial states $(0;m)=(k;m)$ of $\A_{0,n}$ are not accepting. Likewise, for $(k,1)$ and any $k\ge 0$, we find that $W_{k,1}=a_1^k$ is not accepted by $\A_{k,1}$ (cf. Figure~\ref{ptnfa3}).

    For the inductive case $(k,n)\ge (1,2)$, assume that $\A_{k',n'}$ does not accept $W_{k',n'}$ for any $(k',n') < (k,n)$. We have $W_{k,n} = W_{k,n-1} a_n W_{k-1,n}$, and $W_{k,n-1}$ is not accepted by $\A_{k,n-1}$ by induction. Therefore, after reading $W_{k,n-1} a_n$, automaton $\A_{k,n}$ must be in one of the states $(1;m)$, $1\le m\le n$, or $(k+1;n)$. However, states $(1;m)$, $1\le m\le n$, are the initial states of $\A_{k-1,n}$, which does not accept $W_{k-1,n}$ by induction. Assume that $\A_{k,n}$ is in state $(k+1;n)$ after reading $W_{k,n-1} a_n$. Since $W_{k-1,n}$ has exactly $k-1$ occurrences of letter $a_n$, $\A_{k,n}$ is in state $(2k;n)$ after reading $W_{k-1,n}$. Hence $W_{k,n}$ is not accepted by $\A_{k,n}$.
  \end{proof}
    
  The last part of the previous proof shows that the suffix $W_{k-1,n}$ of the word $W_{k,n}=W_{k,n-1}a_nW_{k-1,n}$ is not accepted from state $(k+1;n)$. This can be generalized as follows. 
  \begin{corollary}\label{WnnStructure}
    For any suffix $a_i w$ of $W_{k,n}$, $w$ is not accepted from state $(k+1;i)$ of $\A_{k,n}$.
  \end{corollary}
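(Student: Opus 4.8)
The plan is to first pin down how $\A_{k,n}$ behaves from the states $(k+1;i)$, and then to prove the claim by induction on $k+n$ following the recursion $W_{k,n}=W_{k,n-1}\,a_n\,W_{k-1,n}$. First I would observe that every state $(\ell;c)$ with $\ell\ge k+1$ is non-accepting and, by inspection of the construction, has exactly one outgoing transition for each letter: a self-loop under each $a_j$ with $j<c$; the step $(\ell;c)\xrightarrow{a_c}(\ell+1;c)$, or $(2k;c)\xrightarrow{a_c}\mathit{max}$; and, for every $m>c$, a reset $(\ell;c)\xrightarrow{a_m}(k+1;m)$ supplied by group~\ref{r6}. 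Thus from $(k+1;i)$ the automaton runs deterministically, never drops below level $k+1$, and the only accepting state it can ever reach is \emph{max}. Reaching \emph{max} requires reading the current layer's own letter $k$ consecutive times (letters smaller than the current layer are self-loops, and a letter larger than the current layer resets the level back to $k+1$). Hence reading a word $w$ from $(k+1;i)$ reaches an accepting state if and only if a simple counter reaches $k$, where the counter is reset to $0$ whenever a letter exceeding the current running maximum is read and is incremented on each further occurrence of that maximum. The key point for the induction is that this counter process is \emph{identical} in $\A_{k,n}$ and $\A_{k-1,n}$; only the acceptance threshold changes, from $k$ to $k-1$.

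For the base case $n=1$ I would check directly that $W_{k,1}=a_1^{k}$, so every suffix has the form $a_1^{j}$ with $1\le j\le k$; reading its tail $a_1^{j-1}$ from $(k+1;1)$ ends in the non-accepting state $(k+j;1)$, and the degenerate case $k=0$ is vacuous since $W_{0,n}=\eps$. For the inductive step ($n\ge 2$, $k\ge 1$) I would split the suffixes of $W_{k,n}=W_{k,n-1}\,a_n\,W_{k-1,n}$ into two families: those that lie entirely inside $W_{k-1,n}$, and those of the form $s\,a_n\,W_{k-1,n}$ where $s$ is a (possibly empty) suffix of $W_{k,n-1}$.

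For a suffix $s\,a_n\,W_{k-1,n}$ with $s=a_i s'\ne\eps$, I would use that the transitions under $a_1,\dots,a_{n-1}$ among the states of $\A_{k,n-1}$ are left untouched when $\A_{k,n}$ is built, so the run of $s'$ from $(k+1;i)$ coincides in both automata; the inductive hypothesis at $(k,n-1)$ then guarantees it does not reach \emph{max}. The subsequent $a_n$ resets the run to layer $n$, and since $W_{k-1,n}$ contains exactly $k-1$ occurrences of $a_n$, the counter climbs only to $k-1<k$, so \emph{max} is never reached; the empty-$s$ subcase is precisely the statement that $W_{k-1,n}$ is rejected from $(k+1;n)$, already shown in the last step of Lemma~\ref{exprponfas}. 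For a suffix $a_i w$ lying inside $W_{k-1,n}$, the inductive hypothesis at $(k-1,n)$ says $w$ is not accepted from $(k;i)=((k-1)+1;i)$ of $\A_{k-1,n}$, i.e.\ its counter stays below the threshold $k-1$; since the counter dynamics are the same in $\A_{k,n}$, the same counter stays below $k$, so $w$ is not accepted from $(k+1;i)$ of $\A_{k,n}$ either.

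The hard part will be the first paragraph: I must verify carefully from the six transition groups that the run from $(k+1;i)$ really is deterministic, stays at levels $\ge k+1$, and lands exactly on $(k+1;m)$ on each reset, so that the counter characterization is valid; and I must isolate precisely the $k$-independent counter dynamics together with the monotonicity of the threshold in $k$. These two structural facts are what make the two opposite reductions --- decreasing $n$ for the family $s\,a_n\,W_{k-1,n}$ and decreasing $k$ for the suffixes inside $W_{k-1,n}$ --- fit together; once they are established, the induction itself is routine.
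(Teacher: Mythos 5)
Your proof is correct, but it takes a genuinely different route from the paper's. Both arguments rest on the same structural inspection of the automaton that you carry out in your first paragraph: from any state $(\ell;c)$ with $\ell\ge k+1$ the run is deterministic, letters $a_j$ with $j<c$ self-loop, $a_c$ increments the level (with $(2k;c)\xrightarrow{a_c}\mathit{max}$), any $a_m$ with $m>c$ resets to $(k+1;m)$, and $\mathit{max}$ is the only reachable accepting state. From there, however, the paper argues by contradiction with an extremal choice: it picks the \emph{maximal} index $i$ for which some suffix $a_iw$ of $W_{k,n}$ has $w$ accepted from $(k+1;i)$. Maximality rules out any reset along the accepting run (a reset under $a_m$ with $m>i$ would yield a counterexample with a larger index), so the run stays inside $\{a_1,\ldots,a_i\}^*$, forcing $W_{k,n}$ to contain a factor over $\{a_1,\ldots,a_i\}$ with $k+1$ occurrences of $a_i$; this contradicts the word-combinatorial fact that every maximal factor of $W_{k,n}$ over $\{a_1,\ldots,a_i\}$ is of the form $W_{k-\ell,i}$ and hence contains at most $k$ occurrences of $a_i$. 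You instead run a double induction on $(k,n)$ mirroring the recursion $W_{k,n}=W_{k,n-1}\,a_n\,W_{k-1,n}$, handling resets by the induction itself rather than by an extremal choice, and you need two transfer facts the paper never states: that runs over $\Sigma_{n-1}$ from high states coincide in $\A_{k,n}$ and $\A_{k,n-1}$, and that the counter dynamics are independent of $k$ while the acceptance threshold grows with $k$, so rejection from $(k;i)$ in $\A_{k-1,n}$ implies rejection from $(k+1;i)$ in $\A_{k,n}$. The paper's argument is shorter and isolates a clean, reusable counting property of the word $W_{k,n}$; yours avoids both the extremal trick and the global combinatorial claim, keeps every step local to the recursive definition, and makes the counter structure and its monotonicity in $k$ explicit, at the price of a longer case analysis.
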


  The proof of Lemma~\ref{exprponfas} also shows that the transitions of {\sffamily\bfseries\ref{r6}} are redundant. We thus have the following observation. 
  \begin{corollary}\label{cor_nonacc}
    Removing from $\A_{k,n}$ the non-accepting states $(k+1,i),\ldots,(2k,i)$, for $1\le i \le n$, and the corresponding transitions results in an rpoNFA that accepts the same language.
  \end{corollary}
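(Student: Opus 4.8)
The plan is to argue that the transitions of group~\ref{r6} are never needed on an accepting run, and that the states $(k+1;i),\ldots,(2k;i)$ can be entered only through them; deleting both then leaves the accepted language untouched, while the rpoNFA properties survive the deletion for free. I would first dispose of two routine points. Deleting states and transitions can neither create the forbidden pattern of Figure~\ref{fig_bad_pattern} nor any cycle, so the reduced automaton $\A'$ is again a self-loop deterministic poNFA, i.e.\ an rpoNFA (it may fail to be complete or to satisfy the UMS property, but an rpoNFA need not have either). Deleting states and transitions can also only destroy accepting runs, so $L(\A')\subseteq L(\A_{k,n})=\Sigma_n^*\setminus\{W_{k,n}\}$ and, in particular, $W_{k,n}$ stays rejected. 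Thus only the inclusion $\Sigma_n^*\setminus\{W_{k,n}\}\subseteq L(\A')$ has to be established.

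Next I would verify that the states $(k+1;i),\ldots,(2k;i)$ are reachable only through group~\ref{r6}. The chain transitions of group~\ref{r2} link these states to one another but deliberately omit the index $i=k$, hence give no entry from $(k;i)$; and the transitions added at the higher levels $m>i$ (groups~\ref{r4} and~\ref{r6} in the construction of $\A_{k,m}$) target only states $(j;m)$ with $j\le k$, the fresh state $(k+1;m)$, or $\mathit{max}$. So the sole entry into $(k+1;i)$ is a group~\ref{r6} transition, and removing group~\ref{r6} makes precisely the states $(k+1;i),\ldots,(2k;i)$ unreachable. It therefore suffices to show that group~\ref{r6} is redundant, i.e.\ that every word in $\Sigma_n^*\setminus\{W_{k,n}\}$ admits an accepting run using no group~\ref{r6} transition.

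This is exactly what the acceptance half of the proof of Lemma~\ref{exprponfas} yields, and I would make it explicit by an induction on $(k,n)$ that mirrors that proof. In the base cases the deleted states are the unreachable states that $\A_{k,1}$ carries only for uniformity, so nothing is lost. For the inductive step, a word with a number of occurrences of $a_n$ different from $k$ is accepted by the counting run $(0;n)\to(1;n)\to\cdots\to(k;n)\to\mathit{max}$ assembled from groups~\ref{r1},~\ref{r2}, and~\ref{r3}, which never visits $(k+1;n),\ldots,(2k;n)$. A word $w=u_0a_nu_1a_n\cdots a_nu_k$ with exactly $k$ occurrences of $a_n$ and $w\neq W_{k,n}$ has a least index $i$ with $u_i\neq W_{k-i,n-1}$, so $u_i\in L(\A_{k-i,n-1})$; the witnessing run follows the counting thread over $u_0a_n\cdots u_{i-1}$, branches at the $i$th occurrence of $a_n$ through a group~\ref{r4} transition into the copy of $\A_{k-i,n-1}$ shifted by $i$, accepts $u_i$ inside that copy, and, if a further $a_n$ follows, leaves it through a group~\ref{r5} transition to $\mathit{max}$. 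By the induction hypothesis the run accepting $u_i$ avoids the removed states of $\A_{k-i,n-1}$, which under the shift $j\mapsto j+i$ land inside $(k+1;\cdot),\ldots,(2k;\cdot)$; hence the entire run uses no group~\ref{r6} transition and meets no removed state.

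The step I expect to demand the most care is the bookkeeping of this shift: identifying the states $(i+j;m)$ of $\A_{k,n}$ with the states $(j;m)$ of the embedded $\A_{k-i,n-1}$, and confirming that the group~\ref{r4} and group~\ref{r5} transitions at the branch point genuinely connect the counting thread to that copy and onward to $\mathit{max}$ without ever passing through a state $(k+1;m),\ldots,(2k;m)$. Corollary~\ref{WnnStructure} is a convenient sanity check here: it certifies that the branches through $(k+1;i)$ that we discard are exactly the ones from which the relevant suffix is rejected, so discarding them loses no accepting behaviour.
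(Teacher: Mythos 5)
Your proposal is correct and takes essentially the same route as the paper: the paper justifies Corollary~\ref{cor_nonacc} by the single remark that the acceptance argument in the proof of Lemma~\ref{exprponfas} (groups~\ref{r1}--\ref{r5} alone accept every word other than $W_{k,n}$) never uses the transitions of group~\ref{r6}, through which the states $(k+1;i),\ldots,(2k;i)$ are the only way to be entered. Your explicit induction mirroring that proof, together with the routine observations that deletion preserves self-loop determinism and partial order and can only shrink the language, is precisely this remark spelled out in detail.
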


  Since $\binom{2n}{n}\geq 2^n$, Lemma~\ref{exprponfas} implies that there are ptNFAs $\A_{n,n}$ for which the shortest non-accepted word $W_{n,n}$ is exponential in the size of $\A$. 
  
  A {\em deterministic Turing machine} (DTM) is a tuple $M = (Q,T,I,\delta,\blank,q_o,q_f)$, where $Q$ is the finite state set, $T$ is the tape alphabet, $I\subseteq T$ is the input alphabet, $\blank\in T \setminus I$ is the blank symbol, $q_o$ is the initial state, $q_f$ is the accepting state, and $\delta$ is the transition function mapping $Q\times T$ to $Q\times T \times \{L,R,S\}$; see Aho et al.~\cite{AhoHU74} for details.

  We now prove the main result, whose proof is a nontrivial generalization of our previous construction showing \PSpace-hardness of universality for rpoNFAs~\cite{mfcs16:mktmmt}.
  \begin{theorem}\label{thmMain}
    The universality problem for ptNFAs is \PSpace-complete.
  \end{theorem}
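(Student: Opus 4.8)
The plan is to treat the two directions separately. Membership in \PSpace\ is immediate: a ptNFA is in particular an NFA, and universality for NFAs lies in \PSpace~\cite{MeyerS72}. The real content is \PSpace-hardness, which I would establish by a polynomial-time reduction from the acceptance problem of a $p$-space bounded DTM $\M$ on an input $x$, following the classical template sketched after the statement: encode the computations of $\M$ on $x$ over a suitable alphabet $\Sigma$ and construct an automaton accepting exactly those words that do \emph{not} encode an accepting run, so that universality holds if and only if $\M$ rejects $x$. The violations are local, involving at most two consecutive configurations, so they form a finite language $K$ of words of length $O(p(|x|))$, and the intended language is the finite union $R_x=\Sigma^*K\Sigma^*$ of the form \eqref{eq_part}.

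The first real step is to eliminate the leading $\Sigma^*$, the source of problem~{\sffamily\bfseries(i)}. Choosing $n$ polynomial in $|x|$ so that $|W_{n,n}|=\binom{2n}{n}-1\ge 2^n-1$ exceeds the length of every encoding relevant to the reduction, I would replace $\Sigma^*$ with the ptNFA $\A_{n,n}$ of Lemma~\ref{exprponfas}, which accepts every word except the single exponentially long word $W_{n,n}$. For each state of $\A_{n,n}$ I would attach a private copy of a ptNFA recognizing the finite language $K$ (which exists since $K$ is finite), identifying its initial state with the state of $\A_{n,n}$ precisely when doing so does not create the forbidden self-loop pattern of Figure~\ref{fig_bad_pattern}~(right). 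To run both automata in lockstep I would pass to the product alphabet $\Sigma_n\times\Sigma$: the first component drives $\A_{n,n}$ and the second drives the copies of the $K$-automaton, and a word is accepted when its first projection differs from $W_{n,n}$ or its second projection lies in $\Sigma^*K\Sigma^*$. By the self-loop determinism of $\A_{n,n}$ together with the guarded identification of initial states, this yields an rpoNFA that is universal if and only if $R_x$ is, mirroring~\cite{mfcs16:mktmmt} and resolving problem~{\sffamily\bfseries(i)}.

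The decisive and most delicate step is problem~{\sffamily\bfseries(ii)}: the rpoNFA just built is incomplete, and the naive completion to a fresh sink state destroys the UMS property, hence confluence, and therefore (by Lemma~\ref{ptNFAvsrpoNFA}) the ptNFA property. Instead I would complete the automaton by routing the missing transitions toward the accepting maximal state $\mathit{max}$ of $\A_{n,n}$, so that the two branches demanded by confluence always reunite in $\mathit{max}$. The point at which this must be carried out without corrupting the reduction is exactly where Corollary~\ref{WnnStructure} and the auxiliary non-accepting states $(k+1,i),\ldots,(2k,i)$ deliberately built into $\A_{n,n}$ are used: a missing transition under a given letter may safely be sent to $\mathit{max}$ precisely at those states from which the still-unread suffix of $W_{n,n}$ is already rejected and from which $\mathit{max}$ is reachable under that letter, so that adding it neither causes the critical word $W_{n,n}$ to become accepted nor breaks the partial order or self-loop determinism. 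I expect this completion to be the main obstacle, as one must verify simultaneously completeness, the partial order, self-loop determinism, confluence, and the invariant that the encoding of an accepting run of $\M$ on $x$ stays the only rejected word; the remaining parts are the standard space-bounded encoding and a bookkeeping argument over the product construction.
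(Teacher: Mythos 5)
Your proposal follows essentially the same route as the paper's proof: membership via general NFAs, and hardness by a reduction from a $p$-space bounded DTM in which the leading $\Sigma^*$ of the classical construction is replaced by the ptNFA $\A_{n,n}$ of Lemma~\ref{exprponfas} over a product alphabet, copies of the (finite-language) checker automaton are attached to states of $\A_{n,n}$ where this does not create the forbidden pattern, and the automaton is completed using the auxiliary non-accepting states together with Corollary~\ref{WnnStructure}. One point of caution: the missing transitions must target the states $(n+1;i)$ of $\A_{n,n}$, from which the unread suffix of $W_{n,n}$ is rejected, and never $\mathit{max}$ itself (which would accept the encoding of an accepting run and break the reduction), so your phrase ``sent to $\mathit{max}$'' is correct only as shorthand for the routing through those auxiliary states that your own qualification describes.
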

  \begin{proof}
    Membership follows from the fact that universality is in \PSpace for NFAs~\cite{GareyJ79}. 
    
    To prove \PSpace-hardness, we consider a polynomial $p$ and a $p$-space-bounded DTM $\M = (Q,T,I,\delta,\blank,q_o,q_f)$. Without loss of generality, we assume that $q_o\neq q_f$. 
    A configuration of $\M$ on $x$ consists of a current state $q\in Q$, the position $1\leq \ell\leq p(|x|)$ of the head, and the tape contents $\theta_1,\ldots,\theta_{p(|x|)}$ with $\theta_i\in T$. We represent it by a sequence 
    \[
      \tuple{\theta_1,\eps}\cdots\tuple{\theta_{\ell-1},\eps}\tuple{\theta_{\ell},q}\tuple{\theta_{\ell+1},\eps}\cdots\tuple{\theta_{p(|x|)},\eps}
    \]
    of symbols from $\Delta = T\times(Q\cup\{\eps\})$. A run of $\M$ on $x$ is represented as a word
    $\# w_1 \# w_2 \# \cdots \allowbreak \# w_m \#$, where $w_i\in\Delta^{p(|x|)}$ and
    $\#\notin\Delta$ is a fresh separator symbol.
    One can construct a regular expression recognizing all words over $\Delta\cup\{\#\}$
    that do not correctly encode a run of $\M$ (in particular are not of the form $\# w_1 \# w_2 \# \cdots \allowbreak \# w_m \#$) or that encode a run that is not accepting~\cite{AhoHU74}. Such a regular expression can be constructed in the following three steps:
    \begin{description}
      \item[(A)] we detect all words that do not start with the initial configuration; 
      
      \item[(B)] we detect all words that do not encode a valid run since they violate a transition rule;
      
      \item[(C)] we detect all words that encode non-accepting runs or runs that end prematurely.
    \end{description}
    
    If $\M$ has an accepting run, it has one without repeated configurations. For an input $x$, there are $C(x) = (|T\times(Q\cup\{\eps\})|)^{p(|x|)}$ distinct configuration words in our encoding. Considering a separator symbol $\#$, the length of the encoding of a run without repeated configurations is at most $1+ C(x)(p(|x|)+1)$, since every configuration word ends with $\#$ and is thus of length $p(|x|)+1$. Let $n$ be the least number such that $|W_{n,n}|\geq 1+ C(x)(p(|x|)+1)$, where $W_{n,n}$ is the word constructed in Lemma~\ref{exprponfas}. Since $|W_{n,n}|+1=\binom{2n}{n} \ge 2^n$, it follows that $n$ is smaller than $\lceil\log(1+ C(x)(p(|x|)+1))\rceil$, hence polynomial in the size of $\M$ and $x$.

    Consider the ptNFA $\A_{n,n}$ over the alphabet $\Sigma_n=\{a_1,\ldots,a_n\}$ of Lemma~\ref{exprponfas}, and define the alphabet $\Deltaplus = T\times(Q\cup\{\eps\})\cup\{\#,\$\}$. We consider the alphabet 
    $
      \Pi=\Sigma_n\times\Deltaplus
    $
    where the first letter is an input for $\A_{n,n}$ and the second letter is used for encoding a run as described above. Recall that $\A_{n,n}$ accepts all words different from $W_{n,n}$. Therefore, only those words over $\Pi$ are of our interest, where the first components form the word $W_{n,n}$. Since the length of $W_{n,n}$ may not be a multiple of $p(|x|)+1$, we add $\$$ to fill up any remaining space after the last configuration. 
    
    For a word $w=\tuple{a_{i_1},\delta_1}\cdots \tuple{a_{i_\ell},\delta_\ell}\in\Pi^\ell$, we define $w[1]=a_{i_1}\cdots a_{i_\ell} \in \Sigma_n^\ell$ as the projection of $w$ to the first component and $w[2]=\delta_1\ldots\delta_\ell\in\Deltaplus^\ell$ as the projection to the second component. Conversely, for a word $v\in\Deltaplus^*$, we write $\enc(v)$ to denote the set of all words $w\in\Pi^{|v|}$ with $w[2]=v$. Similarly, for $v\in\Sigma_n^*$, $\enc(v)$ denotes the words $w\in\Pi^{|v|}$ with $w[1]=v$. We extend this notation to sets of words. 
    
    Let $\enc(\A_{n,n})$ denote the automaton $\A_{n,n}$ with each transition $q\xrightarrow{a_i} q'$ replaced by all transitions $q\xrightarrow{\pi} q'$ with $\pi\in\enc(a_i)$. Then $\enc(\A_{n,n})$ accepts the language $\Pi^*\setminus \{\enc(W_{n,n})\}$. We say that a word $w$ encodes an accepting run of $\M$ on $x$ if
    $w[1]=W_{n,n}$ and
    $w[2]$ is of the form $\#w_1\#\cdots\# w_m \# \$^j$ 
    such that there is an $i\in\{1,2,\ldots,m\}$ for which
    $\#w_1\#\cdots\#w_i\#$ encodes an accepting run of $\M$ on $x$, 
    $w_k=w_i$ for all $k\in\{i+1,\ldots,m\}$, and
    $j\leq p(|x|)$.
    That is, we extend the encoding by repeating the accepting configuration until we have less than $p(|x|)+1$ symbols before the end of $|W_{n,n}|$ and fill up the remaining places with $\$$.

    For {\bf (A)}, we want to detect all words that do not start with the word
    \[
      w[2] =\#\tuple{x_1,q_0}\allowbreak\tuple{x_2,\eps}\cdots\allowbreak\tuple{x_{|x|},\eps}\tuple{\blank,\eps}\cdots\tuple{\blank,\eps}\#
    \]
    of length $p(|x|)+2$.
    This happens if 
      (A.1) the word is shorter than $p(|x|)+2$, or 
      (A.2) at position $j$, for $0\leq j\leq p(|x|)+1$, there is a letter from the alphabet $\Deltaplus \setminus\{x_j\}$.
    Let $\bar{E}_j=\Sigma_n \times (\Deltaplus\setminus\{x_j\})$ where $x_j$ is the $j$th symbol on the initial tape of $\M$. We can capture (A.1) and (A.2) in the regular expression
    \begin{equation}\label{eq_re_wrong_start}
      \left(\varepsilon + \Pi + \Pi^2 +\ldots+ \Pi^{p(|x|)+1}\right) 
      + \sum_{0\leq j\leq p(|x|)+1} (\Pi^{j} \cdot \bar{E}_j\cdot\Pi^*)
    \end{equation}
    
    Expression \eqref{eq_re_wrong_start} is polynomial in size. It can be captured by a ptNFA as follows. Each of the first $p(|x|)+2$ expressions defines a finite language and can easily be captured by a ptNFA (by a confluent DFA) of size of the expression. The disjoint union of these ptNFAs then form a single ptNFA recognizing the language $\varepsilon + \Pi + \Pi^2 +\ldots+ \Pi^{p(|x|)+1}$.
    
    To express the language $\Pi^{j} \cdot \bar{E}_j\cdot\Pi^*$ as a ptNFA, we first construct the minimal incomplete DFA recognizing this language (states $0,1,\ldots,j,j+1,max$ in Figure~\ref{fig_const_1}). However, we cannot complete it by simply adding the missing transitions to a new sink state because it results in a DFA with two maximal states, $max$ and the sink state, violating the UMS property. Instead, we use a copy of the ptNFA $\enc(\A_{n,n})$ and add the missing transitions from state $j$ under $\enc(x_j)$ to state $(n+1;i)$ if $\enc(x_j)[1]=a_i$; see Figure~\ref{fig_const_1}. Notice that states $(n+1;i)$ are the states $(k+1;i)$ in Figure~\ref{ptnfa3}.
    \begin{figure}
      \centering
      \includegraphics[scale=.55]{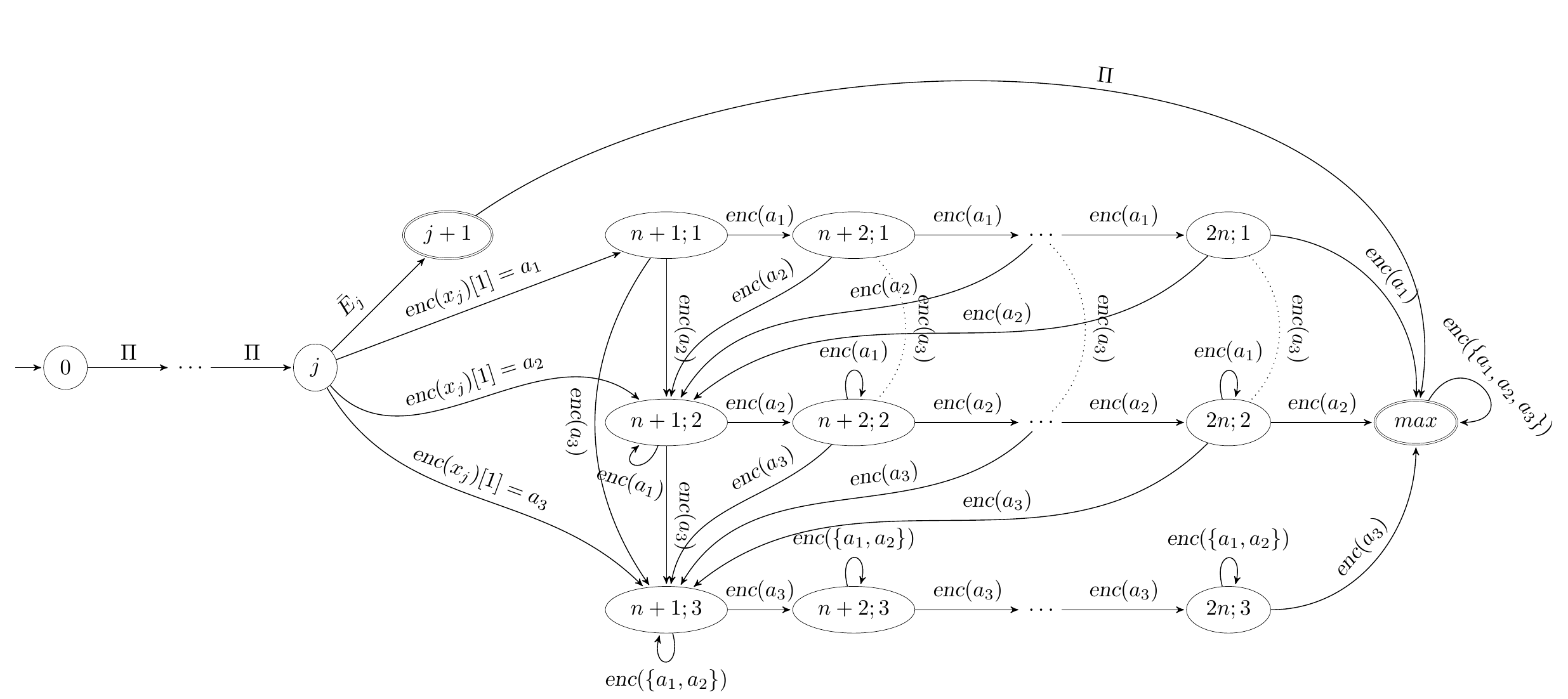}
      \caption{A ptNFA accepting $\Pi^{j} \cdot \bar{E}_j\cdot\Pi^* + (\Pi^*\setminus\{\enc(W_{n,n})\}$ illustrated for $\Sigma_n=\{a_1,a_2,a_3\}$; only the relevant part of $\A_{n,n}$ is depicted}
      \label{fig_const_1}
    \end{figure}
    The resulting automaton is a ptNFA, since it is complete, partially ordered, and satisfies the UMS property -- for every state $q$ different from {\em max}, the component co-reachable and reachable under the letters of self-loops in $q$ is only state $q$ itself. The automaton accepts all words of $\Pi^{j} \cdot \bar{E}_j\cdot\Pi^*$.
    
    We now show that any word $w$ that is accepted by this automaton and that does not belong to $\Pi^{j} \cdot \bar{E}_j\cdot\Pi^*$ is such that $w[1] \neq W_{n,n}$, that is, it belongs to $\Pi^*\setminus\{\enc(W_{n,n})\}$. Assume that $w[1]=W_{n,n}=ua_iv$, where $a_i$ is the position and the letter under which the state $(n+1;i)$ of $\A_{n,n}$ is reached. Then $v$ is not accepted from $(n+1;i)$ by Corollary~\ref{WnnStructure}. Thus, the ptNFA accepts the language $\Pi^{j} \cdot \bar{E}_j\cdot\Pi^* + (\Pi^*\setminus\{\enc(W_{n,n})\})$. Constructing such a ptNFA for polynomially many expressions $\Pi^{j} \cdot \bar{E}_j \cdot \Pi^*$ and taking their union results in a polynomially large ptNFA accepting the language $\sum_{j=0}^{p(|x|)+1} (\Pi^{j} \cdot \bar{E}_j\cdot\Pi^*) + (\Pi^*\setminus \{\enc(W_{n,n})\})$.
    
    Note that we ensure that the surrounding $\#$ in the initial configuration are present.

    For {\bf (B)}, we check for incorrect transitions. Consider again the encoding $\#w_1\#\ldots \allowbreak \#w_m\#$ of a sequence of configurations with a word over $\Delta\cup\{\#\}$. We can assume that $w_1$ encodes the initial configuration according to {\bf (A)}. In an encoding of a valid run, the symbol at any position $j\geq p(|x|)+2$ is uniquely determined by the symbols at positions $j-p(|x|)-2$, $j-p(|x|)-1$, and $j-p(|x|)$, corresponding to the cell and its left and right neighbor in the previous configuration. Given symbols $\delta_\ell,\delta,\delta_r\in\Delta\cup\{\#\}$, we can define $f(\delta_\ell,\delta,\delta_r)\in\Delta\cup\{\#\}$ to be the symbol required in the next configuration. The case where $\delta_\ell=\#$ or $\delta_r=\#$ corresponds to transitions applied at the left and right edge of the tape, respectively; for the case that $\delta=\#$, we define $f(\delta_\ell,\delta,\delta_r)=\#$, ensuring that the separator $\#$ is always present in successor configurations as well. We extend $f$ to $f\colon \Deltaplus^3\to \Deltaplus$. For allowing the last configuration to be repeated, we define $f$ as if the accepting state $q_f$ of $\M$ had a self loop (a transition that does not modify the tape, state, or head position). Moreover, we generally permit $\$$ to occur instead of the expected next configuration symbol. We can then check for invalid transitions using the regular expression
    \begin{equation}\label{eq_re_wrong_trans}
      \Pi^*\, \sum_{\delta_\ell,\delta,\delta_r \in \Deltaplus} \enc(\delta_\ell\delta\delta_r)\cdot  \Pi^{p(|x|)-1} \cdot \hat{f}(\delta_\ell,\delta,\delta_r)\cdot \Pi^*
    \end{equation}
    where $\hat{f}(\delta_\ell,\delta,\delta_r)$ is $\Pi\setminus\enc(\{f(\delta_\ell,\delta,\delta_r),\$\})$. Note that \eqref{eq_re_wrong_trans} detects wrong transitions if a long enough next configuration exists. The case that the run stops prematurely is covered in {\bf (C)}.

    Expression \eqref{eq_re_wrong_trans} is not readily encoded in a ptNFA because of the leading $\Pi^*$. To address this, we replace $\Pi^*$ by the expression $\Pi^{\leq |W_{n,n}|-1}$, which matches every word $w\in\Pi^*$ with $|w|\leq |W_{n,n}|-1$. Clearly, this suffices for our case because the computations of interest are of length $|W_{n,n}|$ and a violation of a correct computation must occur. As $|W_{n,n}|-1$ is exponential, we cannot encode it directly and we use $\enc(\A_{n,n})$ instead.

    In detail, let $E$ be the expression obtained from \eqref{eq_re_wrong_trans} by omitting the initial $\Pi^*$, and let $\B_1$ be an incomplete DFA that accepts the language of $E$ constructed as follows. From the initial state, we construct a tree-shaped DFA corresponding to all words of length three of the finite language $\sum_{\delta_\ell,\delta,\delta_r \in \Deltaplus} \enc(\delta_\ell \delta \delta_r)$. To every leaf state, we add a path under $\Pi$ of length $p(|x|)-1$. The result corresponds to the language $\sum_{\delta_\ell,\delta,\delta_r \in \Deltaplus} \enc(\delta_\ell \delta \delta_r) \cdot \Pi^{p(|x|)-1}$. Let $q_{\delta_\ell \delta \delta_r}$ denote the states uniquely determined by the words in $\enc(\delta_\ell \delta \delta_r) \cdot \Pi^{p(|x|)-1}$. We add the transitions $q_{\delta_\ell \delta \delta_r} \xrightarrow{\enc(\hat{f}(\delta_\ell,\delta,\delta_r))} max'$, where $max'$ is a new accepting state. The automaton is illustrated in the upper part of Figure~\ref{fig_tree}, denoted $\B_1$. It is an incomplete DFA for language $E$ of polynomial size. It is incomplete only in states $q_{\delta_r\delta\delta_\ell}$ due to the missing transitions under $\enc(f(\delta_\ell,\delta,\delta_r))$ and $\enc(\$)$. We complete it by adding the missing transitions to the states of the ptNFA $\A_{n,n}$. Namely, for $z\in \{ \enc(f(\delta_\ell,\delta,\delta_r)), \enc(\$)\}$, we add $q_{\delta_\ell \delta \delta_r} \xrightarrow{~z~} (n+1;i)$ if $z[1]=a_i$. 

    \begin{figure}
      \centering
      \includegraphics[scale=.51]{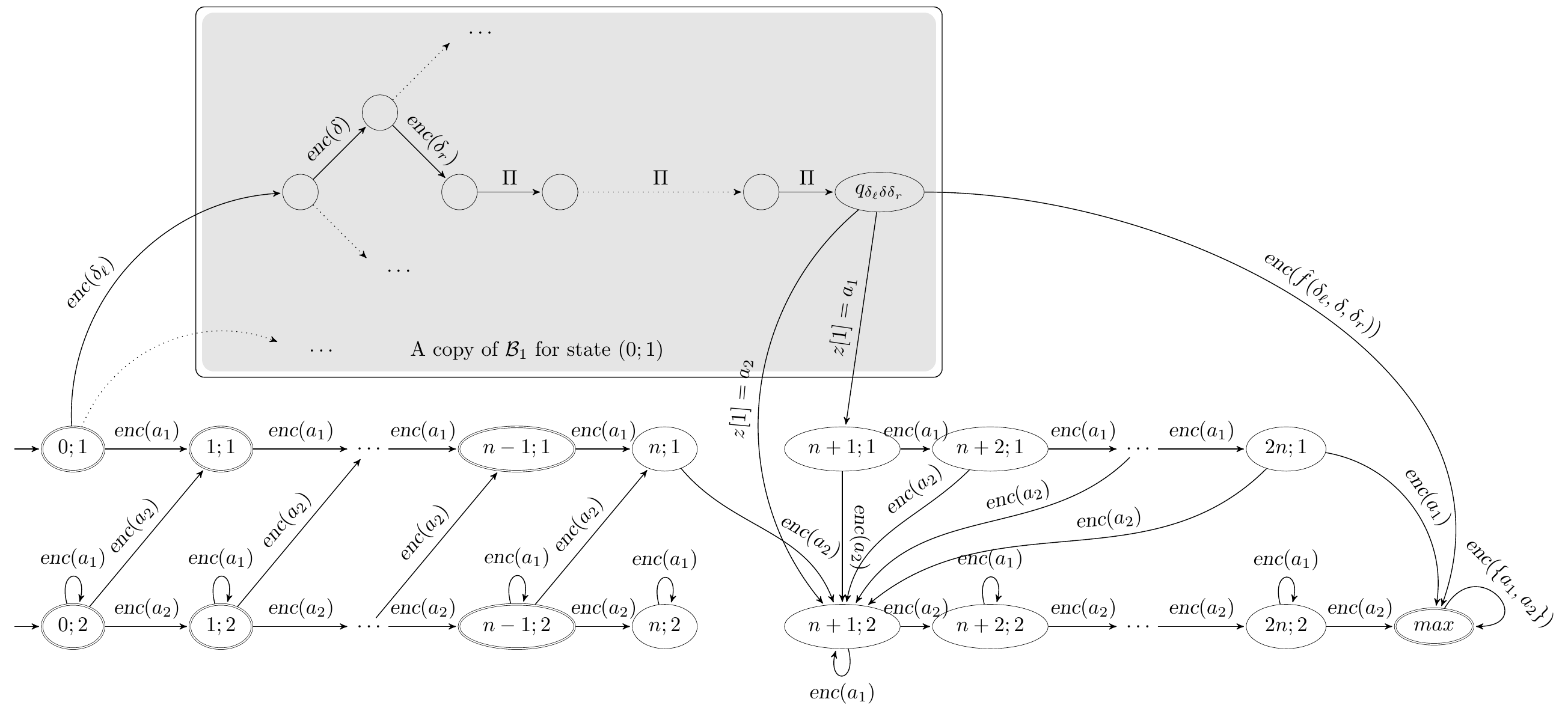}
      \caption{ptNFA $\B$ consisting of $\enc(\A_{n,n})$, $n=2$, with, for illustration, only one copy of ptNFA $\B_1$ for the case the initial state of $\B_1$ is identified with state $(0;1)$ and state $max'$ with state $max$}
      \label{fig_tree}
    \end{figure}
  
    We construct a ptNFA $\B$ accepting the language $(\Pi^*\setminus\{\enc(W_{n,n})\}) + (\Pi^{\leq |W_{n,n}|-1}\cdot E)$ by merging $\enc(\A_{n,n})$ with at most $n(n+1)$ copies of $\B_1$, where we identify the initial state of each such copy with a unique accepting state of $\enc(\A_{n,n})$, if it does not violate the property of ptNFAs (Figure~\ref{fig_bad_pattern}). This is justified by Corollary~\ref{cor_nonacc}, since we do not need to consider connecting $\B_1$ to non-accepting states of $\A_{n,n}$ and it is not possible to connect it to state $max$. We further identify state $max'$ of every copy of $\B_1$ with state $max$ of $\A_{n,n}$. The fact that $\enc(\A_{n,n})$ alone accepts $(\Pi^*\setminus\{\enc(W_{n,n})\})$ was shown in Lemma~\ref{exprponfas}. This also implies that it accepts all words of length $\leq |W_{n,n}|-1$ as needed to show that $(\Pi^{\leq |W_{n,n}|-1}\cdot E)$ is accepted. Entering states of (a copy of) $\B_1$ after accepting a word of length $\geq|W_{n,n}|$ is possible but all such words are longer than $W_{n,n}$ and hence in $(\Pi^*\setminus\{\enc(W_{n,n})\})$.
  
    Let $w$ be a word that is not accepted by (a copy of) $\B_1$. Then, there are words $u$ and $v$ such that $u$ leads $\enc(\A_{n,n})$ to a state from which $w$ is read in a copy of $\B_1$. Since $w$ is not accepted, there is a letter $z$ and a word $v$ such that $uwz$ goes to state $(n+1;i)$ of $\A_{n,n}$ (for $z[1]=a_i$) and $v$ leads $\enc(\A_{n,n})$ from state $(n+1;i)$ to state $max$. If $u[1]w[1]a_iv[1] = W_{n,n,}$, then $v$ is not accepted from $(n+1;i)$ by Corollary~\ref{WnnStructure}, hence $uwzv[1]\neq W_{n,n}$.

    It remains to show that for every proper prefix $w_{n,n}$ of $W_{n,n}$, there is a state in $\A_{n,n}$ reached by $w_{n,n}$ that is the initial state of a copy of $\B_1$, hence the check represented by $E$ in $\Pi^{\leq |W_{n,n}|-1}\cdot E$ can be performed. In other words, if $a_{n,n}$ denotes the letter following $w_{n,n}$ in $W_{n,n}$, then there must be a state reachable by $w_{n,n}$ in $\A_{n,n}$ that does not have a self-loop under $a_{n,n}$. However, this follows from the fact that $\A_{n,n}$ accepts everything but $W_{n,n}$, since then the DFA obtained from $\A_{n,n}$ by the standard subset construction has a path of length $\binom{2n}{n}-1$ labeled with $W_{n,n}$ without any loop. Moreover, any state of this path in the DFA is a subset of states of $\A_{n,n}$, therefore at least one of the states reachable under $w_{n,n}$ in $\A_{n,n}$ does not have a self-loop under $a_{n,n}$.

    The ptNFA $\B$ therefore accepts the language $\Pi^{\leq |W_{n,n}|-1}\cdot E + (\Pi^*\setminus\{\enc(W_{n,n})\})$.

    Finally, for {\bf (C)}, we detect all words that 
      (C.1) end in a configuration that is incomplete (too short),
      (C.2) end in a configuration that is not in the accepting state $q_f$,
      (C.3) end with more than $p(|x|)$ trailing $\$$, or
      (C.4) contain $\$$ not only at the last positions, that is, we detect all words where $\$$ is followed by a different symbol.
    For a word $v$, we use $v^{\leq i}$ to abbreviate $\varepsilon + v + \ldots + v^i$, and we define $\bar{E}_f= (T\times (Q\setminus\{q_f\}))$. 
    \begin{description}
      \item[(C.1)] $\Pi^* \enc(\#) (\Pi + \ldots + \Pi^{p(|x|)}) \enc(\$)^{\leq p(|x|)} +{}$
      \item[(C.2)] $\Pi^* \enc(\bar{E}_f) (\varepsilon + \Pi + \ldots + \Pi^{p(|x|)-1}) \enc(\#) \enc(\$)^{\leq p(|x|)} +{}$
      \hfill \makebox[0pt][r]{%
            \begin{minipage}[b][0pt]{\textwidth}
              \begin{equation}\label{eq_re_wrong_final}\end{equation}
          \end{minipage}}
      \item[(C.3)] $\Pi^* \enc(\$)^{p(|x|)+1} +{}$
      \item[(C.4)] $(\Pi\setminus \enc(\$))^* \enc(\$) \enc(\$)^* (\Pi\setminus \enc(\$)) \Pi^*$
    \end{description}

    As before, we cannot encode the expression directly as a ptNFA, but we can perform a similar construction as the one used for encoding \eqref{eq_re_wrong_trans}. 

    The expressions \eqref{eq_re_wrong_start}--\eqref{eq_re_wrong_final} together then detect all non-accepting or wrongly encoded runs of $\M$. In particular, if we start from the correct initial configuration (\eqref{eq_re_wrong_start} does not match), then for \eqref{eq_re_wrong_trans} not to match, all complete future configurations must have exactly one state and be delimited by encodings of $\#$. Expressing the regular expressions as a single ptNFA of polynomial size, we have thus reduced the word problem of polynomially space-bounded Turing machines to the universality problem for ptNFAs. 
  \end{proof}

\section{Discussion}
  Regular languages as well as recursively enumerable languages possess both deterministic and nondeterministic automata models. It is not typical -- deterministic pushdown automata are strictly less powerful than nondeterministic pushdown automata, and the relationship between deterministic and nondeterministic linearly-bounded Turing machines is a longstanding open problem. Surprisingly, piecewise testable languages as well as $\R$-trivial languages possess such a property -- $\R$-trivial languages are characterized by poDFAs~\cite{BrzozowskiF80} as well as by self-loop deterministic poNFAs~\cite{mfcs16:mktmmt}, and piecewise testable languages by confluent poDFAs~\cite{KlimaP13} as well as by complete, confluent and self-loop deterministic poNFAs~\cite{ptnfas}.

  We also point out that the languages of self-loop deterministic poNFAs (and of their restrictions) are definable by deterministic regular expressions~\cite{mfcs16:mktmmt}. Deterministic regular expressions~\cite{Bruggemann-KleinW98a} are of interest in schema languages for XML data, since the W3C standards require the regular expressions in their specification to be deterministic.

  Whether a language is definable by a poNFA or a type thereof has also been investigated. Bouajjani, Muscholl and Touili~\cite{BMT2001} showed that deciding whether a regular language is an Alphabetical Pattern Constraints (hence recognizable by a poNFA) is \PSpace-complete for NFAs, and \NL-complete for DFAs. The complexity is preserved for self-loop deterministic poNFAs~\cite{mfcs16:mktmmt}, for complete, confluent and self-loop deterministic poNFAs~\cite{mfcs2014ex,ChoH91}, and for saturated poNFAs~\cite{Heam02}. In all cases, \PSpace-hardness is a consequence of a more general result by Hunt~III and Rosenkrantz~\cite{HuntR78}. Although the problem whether there is an equivalent self-loop deterministic poNFA for a given DFA was not discussed in the literature, it can be seen that it reduces to checking whether the minimal DFA is partially order~\cite{mfcs16:mktmmt}, which is an \NL-complete problem.

  A characterization of languages in terms of automata with forbidden patterns can be compared to the results of Gla\ss{}er and Schmitz~\cite{GlasserS08,Schmitz}, who used DFAs with a forbidden pattern to obtain a characterization of level~${3}/{2}$ of the dot-depth hierarchy.

  Other relevant classes of partially ordered automata include partially ordered B\"uchi automata~\cite{KufleitnerL11}, two-way poDFAs~\cite{SchwentickTV01}, and two-way poDFAs with look-around~\cite{LodayaPS10}.

\bibliographystyle{plainurl}
\bibliography{mybib}

\begin{thebibliography}{10}

\bibitem{AhoHU74}
Alfred~V. Aho, John~E. Hopcroft, and Jeffrey~D. Ullman.
\newblock {\em The Design and Analysis of Computer Algorithms}.
\newblock Addison-Wesley, 1974.

\bibitem{AlmeidaBKK15}
Jorge Almeida, Jana Barto{\v{n}}ov{\'{a}}, Ond{\v{r}}ej Kl{\'{\i}}ma, and
  Michal Kunc.
\newblock On decidability of intermediate levels of concatenation hierarchies.
\newblock In {\em Developments in Language Theory}, volume 9168 of {\em LNCS},
  pages 58--70. Springer, 2015.

\bibitem{BarceloLR:jacm14}
Pablo Barcel{\'{o}}, Leonid Libkin, and Juan~L. Reutter.
\newblock Querying regular graph patterns.
\newblock {\em Journal of the {ACM}}, 61(1):8:1--8:54, 2014.

\bibitem{BexGMN09}
Geert~Jan Bex, Wouter Gelade, Wim Martens, and Frank Neven.
\newblock Simplifying {XML} schema: effortless handling of nondeterministic
  regular expressions.
\newblock In {\em {ACM} International Conference on Management of Data
  ({SIGMOD})}, pages 731--744. {ACM}, 2009.

\bibitem{BMT2001}
Ahmed Bouajjani, Anca Muscholl, and Tayssir Touili.
\newblock Permutation rewriting and algorithmic verification.
\newblock {\em Information and Computation}, 205(2):199--224, 2007.

\bibitem{Bruggemann-KleinW98a}
Anne Br{\"{u}}ggemann{-}Klein and Derick Wood.
\newblock One-unambiguous regular languages.
\newblock {\em Information and Computation}, 142(2):182--206, 1998.

\bibitem{BrzozowskiF80}
Janusz~A. Brzozowski and Faith~E. Fich.
\newblock Languages of ${R}$-trivial monoids.
\newblock {\em Journal of Computer and System Sciences}, 20(1):32--49, 1980.

\bibitem{BrzozowskiK78}
Janusz~A. Brzozowski and Robert Knast.
\newblock The dot-depth hierarchy of star-free languages is infinite.
\newblock {\em Journal of Computer and System Sciences}, 16(1):37--55, 1978.

\bibitem{CalvaneseGLV03:rpqreasoning}
Diego Calvanese, Giuseppe {De Giacomo}, Maurizio Lenzerini, and Moshe~Y. Vardi.
\newblock Reasoning on regular path queries.
\newblock {\em {ACM SIGMOD} Record}, 32(4):83--92, 2003.

\bibitem{ChoH91}
Sang Cho and Dung~T. Huynh.
\newblock Finite-automaton aperiodicity is {PSPACE}-complete.
\newblock {\em Theoretical Computer Science}, 88(1):99--116, 1991.

\bibitem{CohenB71}
Rina~S. Cohen and Janusz~A. Brzozowski.
\newblock Dot-depth of star-free events.
\newblock {\em Journal of Computer and System Sciences}, 5(1):1--16, 1971.

\bibitem{icalp2013}
Wojciech Czerwi{\'{n}}ski, Wim Martens, and Tom{\'{a}}{\v{s}} Masopust.
\newblock Efficient separability of regular languages by subsequences and
  suffixes.
\newblock In {\em Internation Colloquium on Automata, Languages and
  Programming}, volume 7966 of {\em LNCS}, pages 150--161, 2013.

\bibitem{Friedman76}
Emily~P. Friedman.
\newblock The inclusion problem for simple languages.
\newblock {\em Theoretical Computer Science}, 1(4):297--316, 1976.

\bibitem{GareyJ79}
Michael~R. Garey and David~S. Johnson.
\newblock {\em Computers and Intractability: A Guide to the Theory of
  NP-Completeness}.
\newblock W. H. Freeman, 1979.

\bibitem{GlasserS08}
Christian Gla{\ss}er and Heinz Schmitz.
\newblock Languages of dot-depth 3/2.
\newblock {\em Theory of Computing Systems}, 42(2):256--286, 2008.

\bibitem{Heam02}
Pierre{-}Cyrille H{\'{e}}am.
\newblock On shuffle ideals.
\newblock {\em RAIRO -- Theoretical Informatics and Applications},
  36(4):359--384, 2002.

\bibitem{HofmanM15}
Piotr Hofman and Wim Martens.
\newblock Separability by short subsequences and subwords.
\newblock In {\em International Conference on Database Theory}, volume~31 of
  {\em LIPIcs}, pages 230--246. Schloss Dagstuhl - Leibniz-Zentrum fuer
  Informatik, 2015.

\bibitem{HuntR78}
Harry B.~Hunt III and Daniel~J. Rosenkrantz.
\newblock Computational parallels between the regular and context-free
  languages.
\newblock {\em {SIAM} Journal on Computing}, 7(1):99--114, 1978.

\bibitem{Jones75}
Neil~D. Jones.
\newblock Space-bounded reducibility among combinatorial problems.
\newblock {\em Journal of Computer and System Sciences}, 11(1):68--85, 1975.

\bibitem{KKP}
Ond{\v r}ej Kl{\'{\i}}ma, Michal Kunc, and Libor Pol{\'{a}}k.
\newblock Deciding $k$-piecewise testability.
\newblock Submitted manuscript, 2014.

\bibitem{KlimaP13}
Ond{\v{r}}ej Kl\'{\i}ma and Libor Pol{\'a}k.
\newblock Alternative automata characterization of piecewise testable
  languages.
\newblock In {\em Developments in Language Theory}, volume 7907 of {\em LNCS},
  pages 289--300. Springer, 2013.

\bibitem{mfcs16:mktmmt_full}
Markus Kr\"otzsch, Tom{\'a}{\v{s}} Masopust, and Micha{\"e}l Thomazo.
\newblock Complexity of universality and related problems for partially ordered
  {NFAs}.
\newblock Extended version of \cite{mfcs16:mktmmt}. Available at
  \url{http://arxiv.org/abs/1609.03460}, 2016.

\bibitem{mfcs16:mktmmt}
Markus Kr\"otzsch, Tom{\'a}{\v{s}} Masopust, and Micha{\"e}l Thomazo.
\newblock On the complexity of universality for partially ordered {NFAs}.
\newblock In {\em Mathematical Foundations of Computer Science}, volume~58 of
  {\em LIPIcs}, pages 61:1--61:14. Schloss Dagstuhl - Leibniz-Zentrum fuer
  Informatik, 2016.

\bibitem{KufleitnerL11}
Manfred Kufleitner and Alexander Lauser.
\newblock Partially ordered two-way {B}{\"{u}}chi automata.
\newblock {\em International Journal of Foundations of Computer Science},
  22(8):1861--1876, 2011.

\bibitem{LodayaPS10}
Kamal Lodaya, Paritosh~K. Pandya, and Simoni~S. Shah.
\newblock Around dot depth two.
\newblock In {\em Developments in Language Theory}, volume 6224 of {\em LNCS},
  pages 303--315. Springer, 2010.

\bibitem{MartensNNS15}
Wim Martens, Frank Neven, Matthias Niewerth, and Thomas Schwentick.
\newblock Bonxai: Combining the simplicity of {DTD} with the expressiveness of
  {XML} schema.
\newblock In {\em Principles of Database Systems}, pages 145--156. {ACM}, 2015.

\bibitem{ptnfas}
Tom{\'{a}}{\v{s}} Masopust.
\newblock Piecewise testable languages and nondeterministic automata.
\newblock In {\em Mathematical Foundations of Computer Science}, volume~58 of
  {\em LIPIcs}, pages 67:1--67:14. Schloss Dagstuhl - Leibniz-Zentrum fuer
  Informatik, 2016.

\bibitem{dlt15}
Tom{\'a}{\v{s}} Masopust and Micha{\"e}l Thomazo.
\newblock On boolean combinations forming piecewise testable languages.
\newblock {\em Theoretical Computer Science}, 2017.
\newblock In press.

\bibitem{McNaughton1971}
Robert McNaughton and Seymour~A. Papert.
\newblock {\em Counter-Free Automata}.
\newblock The MIT Press, 1971.

\bibitem{MeyerS72}
Albert~R. Meyer and Larry~J. Stockmeyer.
\newblock The equivalence problem for regular expressions with squaring
  requires exponential space.
\newblock In {\em Symposium on Switching and Automata Theory}, pages 125--129.
  {IEEE} Computer Society, 1972.

\bibitem{Place15}
Thomas Place.
\newblock Separating regular languages with two quantifiers alternations.
\newblock In {\em Annual {ACM/IEEE} Symposium on Logic in Computer Science},
  pages 202--213. {IEEE} Computer Society, 2015.

\bibitem{PlaceZ_icalp14}
Thomas Place and Marc Zeitoun.
\newblock Going higher in the first-order quantifier alternation hierarchy on
  words.
\newblock In {\em International Colloquium on Automata, Languages and
  Programming}, volume 8573 of {\em LNCS}, pages 342--353. Springer, 2014.

\bibitem{PlaceZ15}
Thomas Place and Marc Zeitoun.
\newblock Separation and the successor relation.
\newblock In {\em Symposium on Theoretical Aspects of Computer Science},
  volume~30 of {\em LIPIcs}, pages 662--675. Schloss Dagstuhl - Leibniz-Zentrum
  fuer Informatik, 2015.

\bibitem{RampersadSX12}
Narad Rampersad, Jeffrey Shallit, and Zhi Xu.
\newblock The computational complexity of universality problems for prefixes,
  suffixes, factors, and subwords of regular languages.
\newblock {\em Fundamenta Informatica}, 116(1--4):223--236, 2012.

\bibitem{Schmitz}
Heinz Schmitz.
\newblock {\em The forbidden pattern approach to concatenation hierachies}.
\newblock PhD thesis, University of W\"urzburg, Germany, 2000.

\bibitem{Schutzenberger65a}
Marcel~Paul Sch{\"{u}}tzenberger.
\newblock On finite monoids having only trivial subgroups.
\newblock {\em Information and Control}, 8(2):190--194, 1965.

\bibitem{SchwentickTV01}
Thomas Schwentick, Denis Th{\'{e}}rien, and Heribert Vollmer.
\newblock Partially-ordered two-way automata: {A} new characterization of {DA}.
\newblock In {\em Developments in Language Theory}, volume 2295 of {\em LNCS},
  pages 239--250. Springer, 2001.

\bibitem{Senizergues97}
G{\'{e}}raud S{\'{e}}nizergues.
\newblock {$L(A) = L(B)$?}
\newblock {\em Electronic Notes in Theoretical Computer Science}, 9:43, 1997.

\bibitem{Simon1972}
Imre Simon.
\newblock {\em Hierarchies of Events with Dot-Depth One}.
\newblock PhD thesis, University of Waterloo, Canada, 1972.

\bibitem{SMKR:elcq14}
Giorgio Stefanoni, Boris Motik, Markus Kr{\"{o}}tzsch, and Sebastian Rudolph.
\newblock The complexity of answering conjunctive and navigational queries over
  {OWL} 2 {EL} knowledge bases.
\newblock {\em Journal of Artificial Intelligence Research}, 51:645--705, 2014.

\bibitem{StockmeyerM73}
Larry~J. Stockmeyer and Albert~R. Meyer.
\newblock Word problems requiring exponential time: Preliminary report.
\newblock In {\em ACM Symposium on the Theory of Computing}, pages 1--9. {ACM},
  1973.

\bibitem{Straubing81}
Howard Straubing.
\newblock A generalization of the {S}ch{\"{u}}tzenberger product of finite
  monoids.
\newblock {\em Theoretical Computer Science}, 13:137--150, 1981.

\bibitem{Straubing85}
Howard Straubing.
\newblock Finite semigroup varieties of the form {V*D}.
\newblock {\em Journal of Pure and Applied Algebra}, 36:53--94, 1985.

\bibitem{Therien81}
Denis Th{\'{e}}rien.
\newblock Classification of finite monoids: The language approach.
\newblock {\em Theoretical Computer Science}, 14:195--208, 1981.

\bibitem{mfcs2014ex}
\v{S}t{\v{e}}p{\'{a}}n Holub, Tom{\'a}{\v{s}} Masopust, and Micha{\"e}l
  Thomazo.
\newblock Alternating towers and piecewise testable separators.
\newblock Available at \url{http://arxiv.org/abs/1409.3943}, 2014.

\end{thebibliography}

\appendix
\section{Proofs}
  In this part, we present proofs omitted in the main body of the paper.

  The following result was reported without proof~\cite{ptnfas}. For the convenience of reviewers, we provide the proof here.
  \begin{replemma}{ptNFAvsrpoNFA}
    Complete and confluent rpoNFAs are exactly ptNFAs.
  \end{replemma}
  \begin{proof}
    First, we show that if $\A$ is a ptNFA, then $\A$ is a complete and confluent rpoNFA. Indeed, the definition of ptNFAs says that $\A$ is complete, partially ordered, and does not contain the pattern of Figure~\ref{fig_bad_pattern}, since the pattern violates the UMS property. Thus, it is a complete rpoNFA. To show that $\A$ is confluent, let $r$ be a state of $\A$, and let $a$ and $b$ be letters of its alphabet ($a=b$ is not excluded) such that $ra \ni s \neq t \in rb$. Let $s'$ and $t'$ be any maximal states reachable from $s$ and $t$ under the alphabet $\{a,b\}$, respectively. By the UMS property of $\A$, there is a path from $t'$ to $s'$ under $\Sigma(s')$ and a path from $s'$ to $t'$ under $\Sigma(t')$. Since $\A$ is partially ordered, $s'=t'$, which shows that $\A$ is confluent.

    On the other hand, assume that $\A$ is a complete and confluent rpoNFA. To show that it is a ptNFA, we show that it satisfies the UMS property. For the sake of contradiction, assume that the UMS property is not satisfied, that is, there is a state $q$ in $\A$ such that the component $G(\A,\Sigma(g))$ of $\A$ containing $q$ and consisting only of transitions labeled with $\Sigma(q)$ has at least two maximal states with respect to $\Sigma(q)$. Let $r$ be the biggest state in $G(\A,\Sigma(g))$ with respect to the partial order on states such that at least two different maximal states, say $s\neq t$, are reachable from $r$ under $\Sigma(q)$.
    Such a state exists by assumption. We have that $r\notin \{s,t\}$; indeed, if $r=s$, then $t \in r\cdot au$, for some $a\in\Sigma(q)$ and $u\in\Sigma(q)^*$. Since $\A$ is an rpoNFA, it does not have any patter of Figure~\ref{fig_bad_pattern}, which means that $a \notin \Sigma(s) \supseteq \Sigma(q)$, a contradiction. Let $s' \in ra$ and $t' \in rb$ be two different states on the path from $r$ to $s$ and $t$, respectively, for some letters $a,b\in\Sigma(q)\setminus\Sigma(r)$. Then $r < \min\{s',t'\}$. Since $\A$ is confluent, there exists $r'$ such that $r'\in s'w \cap t'w$, for some $w \in \{a,b\}^*$. Let $r''$ denote a maximal state that is reachable from $r'$ under $\Sigma(q)$. There are three cases: 
    (i) if $r'' = s$, then $r < t'$ and both $s$ and $t$ are reachable from $t'$ under $\Sigma(q)$, which yields a contradiction with the choice of $r$;
    (ii) $r'' = t$ yields a contradiction with the choice of $r$ as in (i) by replacing $t'$ with $s'$; and
    (iii) $r'' \notin \{s,t\}$ yields also a contradiction with the choice of $r$, since $r < \min\{s',t'\}$ and, e.g., $s$ and $r''$ are two different maximal states with respect to $\Sigma(q)$ reachable from $s'$ under $\Sigma(q)$. 
    Thus, $\A$ satisfies the UMS property, which completes the proof.
  \end{proof}

  \begin{reptheorem}{thmMainNL}
    The universality problem for ptNFAs over a unary alphabet is \NL-complete.
  \end{reptheorem}
  \begin{proof}
    The problem is in \NL even for unary poNFAs~\cite{mfcs16:mktmmt_full}. 
    We prove hardness by reduction from the \NL-complete DAG-reachability problem~\cite{Jones75}. Let $G$ be a directed acyclic graph with $n$ nodes, and let $s$ and $t$ be two nodes of $G$. We define a ptNFA $\A$ as follows. With each node of $G$, we associate a state in $\A$. Whenever there is an edge from $i$ to $j$ in $G$, we add a transition $i\xrightarrow{a} j$ to $\A$. In addition, we add $n-1$ new non-accepting states $f_1,\ldots,f_{n-1}$ together with the transitions $f_i \xrightarrow{a} f_{i+1}$, for $i=1,\ldots,n-2$. For every state $q\notin\{t,f_1,\ldots,f_{n-1}\}$, we add a transition $q\xrightarrow{a} f_1$. Finally, we add a self-loop $t\xrightarrow{a} t$ and a transition $f_{n-1} \xrightarrow{a} t$. The initial state of $\A$ is $s$ and all states corresponding to nodes of $G$ are final. The automaton is partially ordered, complete and satisfies the UMS property, since state $t$ is the only state with a self-loop and every path under $a^*$ ends up in it. 
    
    It remains to show that $\A$ is universal if and only if there is a path from $s$ to $t$ in $G$. If $t$ is reachable from $s$ in $G$, then $L(\A)=\Sigma^*$, since $t$ is reachable from $s$ via states corresponding to nodes of $G$, which are all accepting in $\A$. If $t$ is not reachable from $s$ in $G$, then $t$ is reachable from $s$ in $\A$ via the path $s \xrightarrow{a^k} q \xrightarrow{a} f_1\xrightarrow{a} f_2\xrightarrow{a} \ldots \xrightarrow{a} f_{n-1} \xrightarrow{a} t$, for any $q$ corresponding to a node of $V\setminus\{t\}$ reachable from $s$ in $G$. We show that $a^{n-1}$ does not belong to $L(\A)$. The shortest path from state $s$ to state $t$ in $\A$ is of length $n$ for $q=s$. Thus, any word accepted in $t$ is of length at least $n$. On the other hand, every word accepted in a state corresponding to a node of $V\setminus\{t\}$ is of length at most $n-2$, since $|V\setminus\{t\}|=n-1$ and $\A$ is acyclic (without self-loops) on those states. This gives that $a^{n-1}$ is not accepted by $\A$, hence $L(\A)$ is not universal.
  \end{proof}

  \begin{repcorollary}{WnnStructure}
    For any suffix $a_i w$ of $W_{k,n}$, $w$ is not accepted from state $(k+1;i)$ of $\A_{k,n}$.
  \end{repcorollary}
  \begin{proof}
    Consider the word $W_{k,n}$ over $\Sigma_n=\{a_1,a_2,\ldots,a_n\}$ constructed in the proof of Lemma~\ref{exprponfas}, and let $i\in\{1,\ldots,n\}$ be the maximal number for which there is a suffix $a_i w$ of $W_{k,n}$ such that $w$ is accepted by $\A_{k,n}$ from state $(k+1;i)$. Then $W_{k,n}=w_1a_i w_2 w_3$, where $w_2\in\{a_1,\ldots,a_i\}^*$ is the shortest word labeling the path from state $(k+1;i)$ to state $max$. By the construction of $\A_{k,n}$, word $a_iw_2$ must contain $k+1$ letters $a_i$. We shown that $W_{k,n}$ does not contain more than $k$ letters $a_i$ interleaved only with letters $a_j$ for $j<i$, which yields a contradiction that proves the claim.
      
    By definition, every longest factor of $W_{k,n}$ over $\{a_1,\ldots,a_i\}$ is of the form $W_{k-\ell,i}$, for $\ell \in\{0,\ldots,k-1\}$. Since $W_{k-\ell,i} = W_{k-\ell,i-1}\, a_i\, W_{k-\ell-1,i-1}\, a_i\, \cdots\, a_i\, W_{1,i-1}\, a_i$, the number of occurrences of $a_i$ interleaved only with letters $a_j$ for $j<i$ is at most $k-\ell$, which results in $k$ for $\ell=0$ as claimed above.
  \end{proof}

  Here we present the omitted part of the proof of the main theorem.
  \begin{reptheorem}{thmMainNL}
    The universality problem for ptNFAs over a unary alphabet is \NL-complete.
  \end{reptheorem}
  \begin{proof}[Proof (omitted parts)]
        Finally, for {\bf (C)}, we detect all words that 
      (C.1) end in a configuration that is incomplete (too short),
      (C.2) end in a configuration that is not in the accepting state $q_f$,
      (C.3) end with more than $p(|x|)$ trailing $\$$, or
      (C.4) contain $\$$ not only at the last positions, that is, we detect all words where $\$$ is followed by a different symbol.
    For a word $v$, we use $v^{\leq i}$ to abbreviate $\varepsilon + v + \ldots + v^i$, and we define $\bar{E}_f= (T\times (Q\setminus\{q_f\}))$. 
    \begin{align*}
      {\bf (C.1)\qquad} & \Pi^* \enc(\#) (\Pi + \ldots + \Pi^{p(|x|)}) \enc(\$)^{\leq p(|x|)}  +{}\nonumber\\
      {\bf (C.2)\qquad} & \Pi^* \enc(\bar{E}_f) (\varepsilon + \Pi + \ldots + \Pi^{p(|x|)-1}) \enc(\#) \enc(\$)^{\leq p(|x|)} +{} \tag{\ref{eq_re_wrong_final}}\\
      {\bf (C.3)\qquad} & \Pi^* \enc(\$)^{p(|x|)+1} +{} \nonumber \\
      {\bf (C.4)\qquad} & (\Pi\setminus \enc(\$))^* \enc(\$) \enc(\$)^* (\Pi\setminus \enc(\$)) \Pi^* \nonumber
    \end{align*}
    
    As before, we cannot encode the expression directly as a ptNFA, but we can perform a similar construction as in \eqref{eq_re_wrong_trans}. Namely, 
      a ptNFA for C.1 is illustrated in Figure~\ref{figC1}, 
      for C.2 in Figure~\ref{figC2}, and
      for C.3 in Figure~\ref{figC3}. 
      Finally, C.4 can be represented by a three-state partially ordered and confluent DFA.
  \end{proof}

    \begin{figure}[!ht]
      \centering
      \includegraphics[scale=.5]{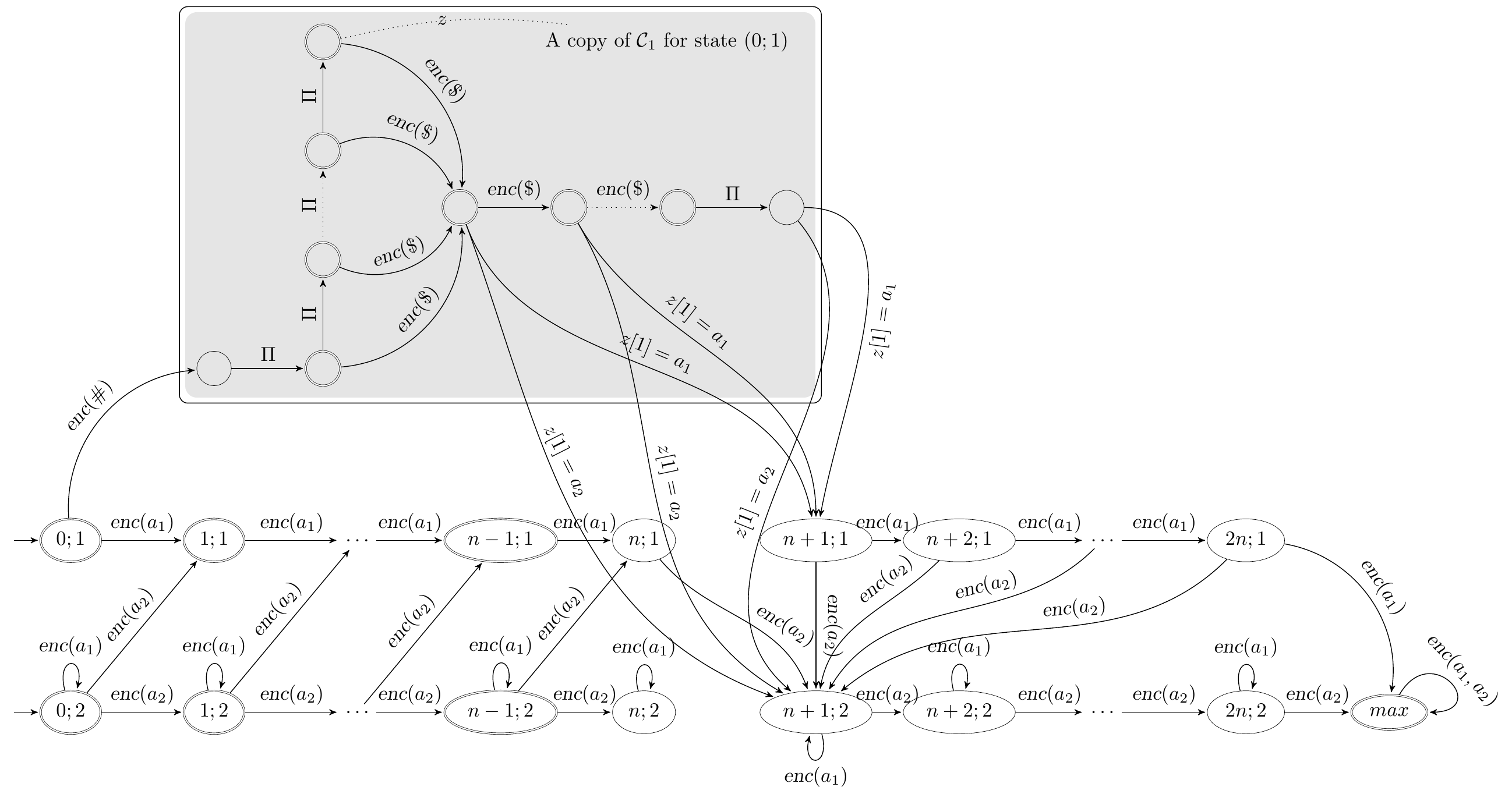}
      \caption{The ptNFA for expression C.1 illustrated for $n=2$}
      \label{figC1}
    \end{figure}

    \begin{figure}[!ht]
      \centering
      \includegraphics[scale=.5]{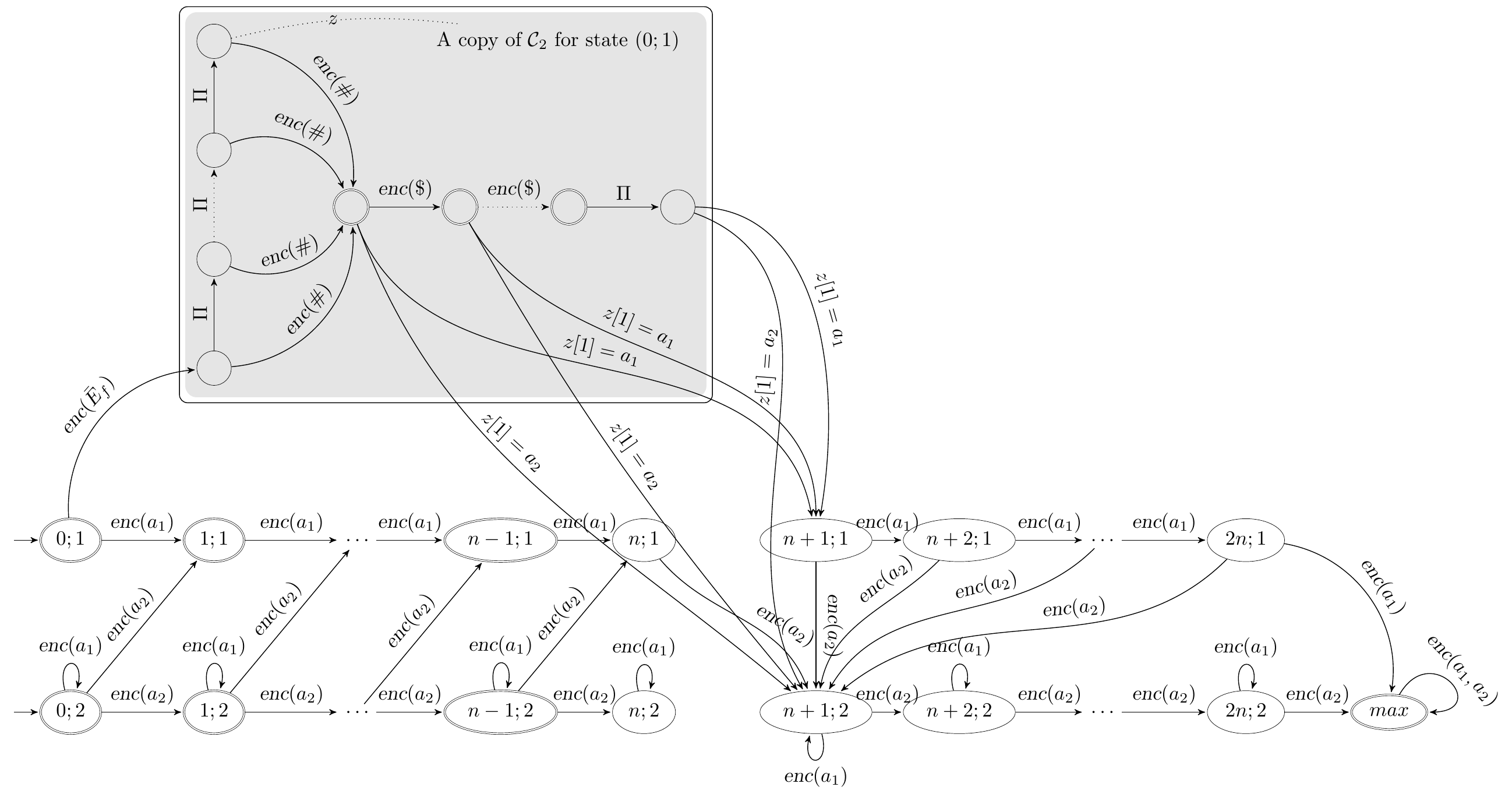}
      \caption{The ptNFA for expression C.2 illustrated for $n=2$}
      \label{figC2}
    \end{figure}

    \begin{figure}[!ht]
      \centering
      \includegraphics[scale=.5]{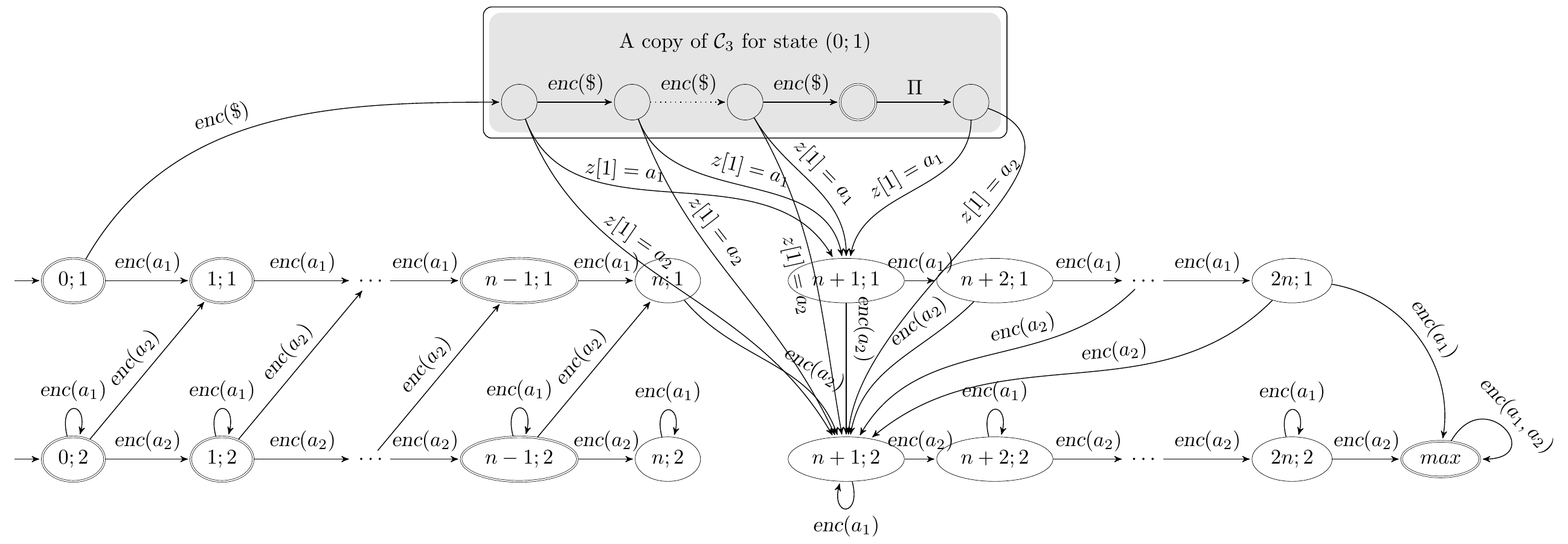}
      \caption{The ptNFA for expression C.3 illustrated for $n=2$}
      \label{figC3}
    \end{figure}

\end{document}